\documentclass[10pt,letterpaper,conference]{IEEEtran}

\usepackage[T1]{fontenc} % optional
\usepackage{times}
\usepackage{amsmath}
\usepackage{amssymb}
\usepackage{amsthm}
\usepackage{color}
\usepackage{algorithm}
\usepackage[noend]{algorithmic}
\usepackage{subfigure}
\usepackage{multirow}
\usepackage[bookmarks=false,colorlinks=false,pdfborder={0 0 0}]{hyperref}
\usepackage{cite}
\usepackage{bm}
\usepackage{arydshln}
\usepackage{mathtools}
\usepackage{microtype}
\usepackage{subfigure}
\usepackage{float}
\usepackage{makecell}
\usepackage{graphicx}
\usepackage{graphics}

\usepackage{multicol}   
\usepackage{threeparttable}  
\usepackage{rotating}
\usepackage{mathrsfs}
\usepackage{authblk}
\usepackage{colortbl}

\usepackage{makecell,multirow,diagbox}

\newtheorem{theorem}{Theorem}

\newtheorem{lemma}[theorem]{Lemma}

\long\def\symbolfootnote[#1]#2{\begingroup
	\def\thefootnote{\fnsymbol{footnote}}\footnote[#1]{#2}\endgroup}
\renewcommand{\paragraph}[1]{{\bf #1}}

\long\def\symbolfootnote[#1]#2{\begingroup
	\def\thefootnote{\fnsymbol{footnote}}\footnote[#1]{#2}\endgroup}

\ifodd1%revise of the text
\newcommand{\com}[1]{\textbf{\color{blue} (COMMENT: #1)}}%comment of the text
\else\newcommand{\com}[1]{}\fi

\begin{document}
	\title{Joint Design of Piggyback and Conjugate Transformation Functions for Repair Bandwidth Reduction in Piggybacking Codes}
	
% 	\author{Hao Shi$^\dagger$, Zhengyi Jiang$^\dagger$$^\ddagger$, Zhongyi Huang$^\dagger$,
% Bo Bai$^\ddagger$, Gong Zhang$^\ddagger$, and Hanxu Hou$^\ddagger$$^\star$\\
% $^\dagger$ Department of Mathematics Sciences, Tsinghua University, Beijing, China \\
% % $^\ddagger$ Theory Lab, Huawei Tech. Co. Ltd., Beijing, China
% $^\ddagger$ Theory Lab, Central Research Institute, 2012 Labs, Huawei Tech. Co. Ltd., Hong Kong SAR
% }
\author{Hao Shi, Zhengyi Jiang, Gefeng Deng, Zhongyi Huang, and Hanxu Hou}

	\maketitle
	
	\begin{abstract}
 \symbolfootnote[0]{
This paper was presented in part at the IEEE Information Theory Workshop (ITW), 2024 \cite{Shi2411:Conjugate}.
H. Shi, Z. Jiang, and Z. Huang are with 
the Department of Mathematics Sciences, Tsinghua University, Beijing, China~(E-mail: shih22@mails.tsinghua.edu.cn, jzy10492761962@163.com, zhongyih@tsinghua.edu.cn). G. Deng is with the
Shenzhen Institute for Advanced Study, University of Electronic Science and Technology of China, Chengdu, China~(E-mail: 202512281042@std.uestc.edu.cn). 
H. Hou is with the Shenzhen University of Advanced Technology~(E-mail: houhanxu@163.com). 

This research is supported in part by National Key Research and Development Program of China under Grant No. 2025YFA1017200.  {\em (Corresponding author: Hanxu Hou.)}}
Efficient node repair is a central requirement in distributed storage systems, particularly in high-rate erasure-coded deployments where repair traffic directly affects network overhead and recovery cost. Piggybacking codes reduce the repair bandwidth of MDS array codes while keeping the sub-packetization level small. However, existing piggybacking constructions often rely on restrictive piggyback-function designs to preserve the MDS property over small fields, which limits their repair-bandwidth reduction.
We propose {\em conjugate-piggybacking} codes, a new class of MDS array codes that jointly design piggyback functions and conjugate transformations under small sub-packetization. The proposed construction improves repair efficiency while preserving the MDS property over moderate field sizes. In particular, it enables some parity nodes to achieve optimal repair bandwidth and reduces the overall repair bandwidth compared with existing piggybacking-based designs.
We analyze the MDS property and repair bandwidth of the proposed codes and evaluate them against existing piggybacking codes under high-code-rate settings over $\mathbb{F}_{2^8}$. We further conduct a repair-traffic simulation under uniform single-node failures to quantify the expected traffic reduction in storage-oriented settings. The results show that our construction consistently achieves lower repair bandwidth than related piggybacking codes and reduces expected repair traffic compared with conventional RS repair. These gains are obtained at the cost of a slightly larger field size, revealing a practical trade-off between repair efficiency and field-size overhead for high-rate distributed storage.	\end{abstract}
	
        \begin{IEEEkeywords}
		MDS array codes, sub-packetization, repair bandwidth
	\end{IEEEkeywords}
	
	\IEEEpeerreviewmaketitle
\section{Introduction}
\label{sec:1}
Erasure coding is widely used in modern distributed storage systems to provide fault tolerance with substantially lower storage overhead than replication. Among erasure-coding schemes, maximum distance separable (MDS) codes are especially attractive because they provide the strongest reliability guarantee for a given redundancy budget. In practice, however, the efficiency of an erasure-coded storage system is determined not only by storage overhead but also by the cost of repairing failed nodes. This issue is particularly important in high-rate deployments, where repair traffic can significantly affect network overhead, recovery latency, and overall system performance.

Consider a storage system with $n$ storage nodes that encodes $\mathcal{M}=k\ell$ source symbols into $n\ell$ coded symbols, where each node stores $\ell$ symbols and $r=n-k$ denotes the number of parity nodes. If the original data can be reconstructed from any $k$ of the $n$ nodes, then the code has the $(n,k)$ MDS property, and we refer to it as an $(n,k,\ell)$ MDS array code. The parameter $\ell$ is called the sub-packetization level. In this paper, we focus on systematic MDS array codes with $k$ data nodes and $r$ parity nodes. Reed--Solomon (RS) codes \cite{reed1960} are the classical example, corresponding to the case $\ell=1$.

A major limitation of traditional MDS codes in storage applications is their high repair cost. To repair one failed node, a conventional MDS code typically downloads $\ell$ symbols from each of the $k$ surviving nodes, resulting in a repair bandwidth of $k\ell$ symbols. Regenerating codes, introduced by Dimakis {\em et al}. \cite{dimakis2010}, showed that the minimum repair bandwidth for an $(n,k,\ell)$ MDS array code is $\frac{(n-1)\ell}{n-k}$, and codes achieving this bound are known as minimum storage regenerating (MSR) codes. Although many MSR constructions have been developed \cite{rashmi2011,tamo2013,hou2016,2017Explicit,li2018,2018A,hou2019a,hou2019b}, their practical use in high-rate storage systems is often constrained by the rapid growth of sub-packetization. In particular, \cite{2018A} showed that, under high-code-rate parameters (i.e., $\frac{k}{n}>\frac{1}{2}$), the required sub-packetization level can grow exponentially with respect to $n$ and $k$.

Piggybacking codes provide an appealing alternative for storage systems that require lower repair traffic without the large sub-packetization overhead of MSR codes. Introduced by Rashmi {\em et al}. \cite{2017Piggybacking}, the piggybacking framework constructs MDS array codes with relatively small sub-packetization and modest field size while reducing the repair bandwidth for single-node failures compared with traditional MDS codes. Because of this favorable trade-off, piggybacking-based designs have received considerable attention in storage-oriented coding research \cite{2018Repair,2021piggybacking,2021piggyback,2019AnEfficient,2022Piggyback,2023gcpig,jiang2024toward}. However, under small-field and high-rate conditions, existing piggybacking codes typically require restrictive piggyback-function designs to preserve the MDS property, which limits the achievable reduction in repair bandwidth.

More broadly, several MDS array code constructions have been proposed to reduce repair bandwidth while keeping sub-packetization small \cite{2018hetc, 2023wk1,2023wk2,2023tran,2023gcplus}. This design space involves a fundamental trade-off among repair efficiency, field size, and sub-packetization. For example, HashTag Erasure Codes (HTEC) \cite{2018hetc}, Elastic Transformed-RS (ET-RS) codes \cite{2023tran}, and Bidirectional Piggybacking Design (BPD) \cite{2023wk1,2023wk2} can reduce repair bandwidth, but they require relatively large field sizes to maintain the MDS property. In contrast, piggybacking+ codes \cite{2023gcplus} operate over small fields of order $O(n)$, but they require the sub-packetization level to be a multiple of $r$ and strictly larger than $r$. Existing schemes therefore do not fully meet the need for a storage-efficient design that simultaneously supports low repair bandwidth, small sub-packetization, and moderate field size in high-rate regimes.

To address this problem, we propose {\em conjugate-piggybacking} codes, a new family of MDS array codes with sub-packetization $\ell=n-k$. The main idea is to jointly design piggyback functions and a conjugate transformation to improve repair efficiency under small-sub-packetization constraints. In the proposed construction, piggyback functions reduce the repair bandwidth of data nodes, while the conjugate transformation enables some parity nodes to achieve optimal repair bandwidth without increasing the sub-packetization level to a multiple of $r$. Compared with piggybacking+ codes \cite{2023gcplus}, our construction preserves the transformation benefit while keeping the sub-packetization fixed at $\ell=n-k$. Meanwhile, the MDS property can be maintained over a field of size $O(kr^2)$.

We analyze the MDS property and repair bandwidth of the proposed codes and show that they achieve lower repair bandwidth than existing related piggybacking codes. Under the evaluated high-code-rate setting with $r=4$, $4\leq k\leq 52$, $\ell=4$, and field $\mathbb{F}_{2^8}$, the proposed construction yields lower repair bandwidth than the existing related piggybacking schemes. We also conduct a repair-traffic simulation under representative high-rate parameters, showing that the proposed design reduces the expected single-node repair traffic by $37.0\%$--$54.8\%$ relative to conventional RS repair. These results indicate that conjugate-piggybacking codes provide a favorable trade-off among repair efficiency, sub-packetization, and field-size overhead for high-rate distributed storage systems.

% In Section \ref{sec:2}, we present the construction of our conjugate-piggybacking codes. In Section \ref{sec:3}, we present the repair method for our conjugate-piggybacking codes. In Section \ref{sec:4}, we analyze the repair bandwidth of conjugate-piggybacking codes. In Section \ref{sec:5}, we evaluate our conjugate-piggybacking codes and the existing piggybacking codes. We conclude the paper in Section \ref{sec:6}.
The remainder of this paper is organized as follows. 
Section~\ref{sec:1.1} presents an illustrative example of the proposed conjugate-piggybacking codes. Section~\ref{sec:2} gives the general construction. Section~\ref{sec:3} describes the repair procedures for single-node failures. Section~\ref{sec:4} analyzes the repair bandwidth of the proposed codes. Section~\ref{sec:5} evaluates the proposed construction and compares it with existing piggybacking-based schemes. Section~\ref{sec:6} concludes the paper.

\section{An Illustrative Example}
\label{sec:1.1}

We illustrate the main idea of the proposed conjugate-piggybacking design using an example with parameters $(n,k,\ell)=(14,10,4)$, as shown in Fig.~\ref{fig:0.1}. The example provides intuition for the construction and shows how the proposed design jointly improves the repair efficiency of data nodes and parity nodes under small sub-packetization.

\begin{figure*}[htpb]
	\centering
	\includegraphics[width=0.92\linewidth]{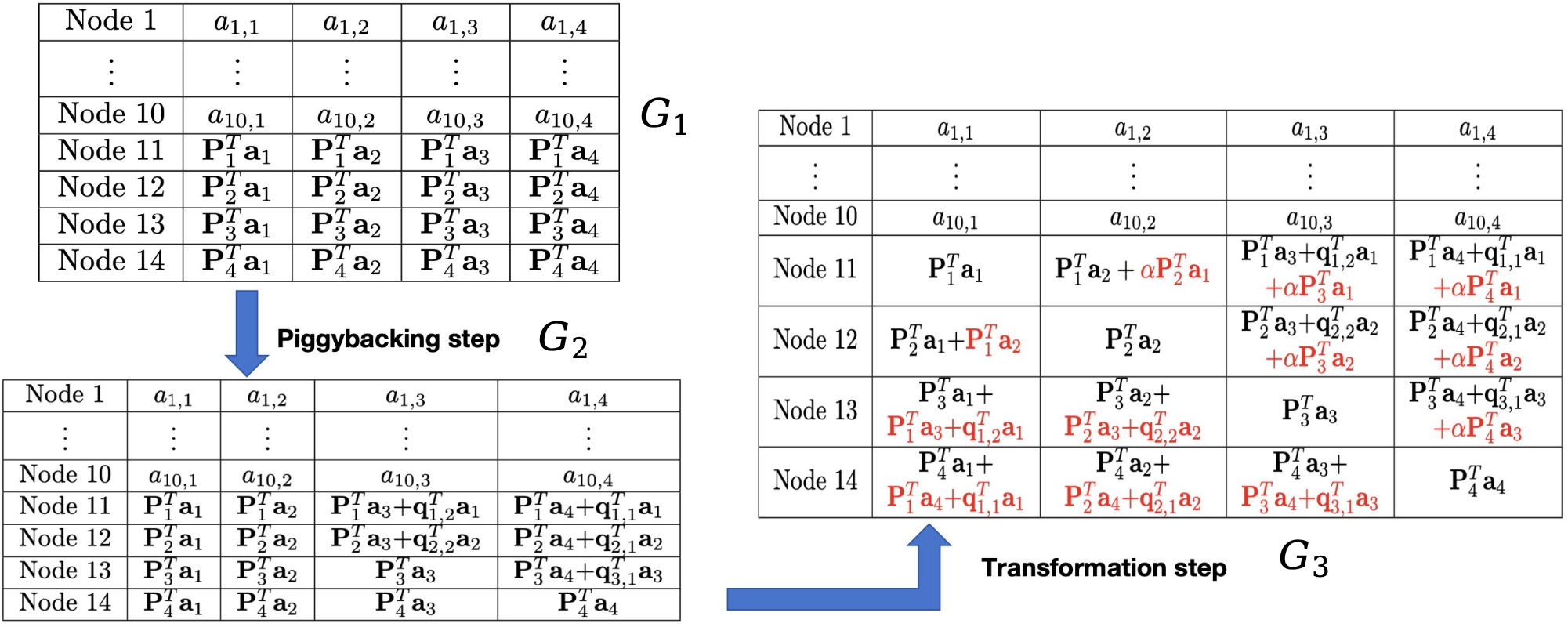}
	\caption{An example of $(n,k,\ell)=(14,10,4)$ conjugate-piggybacking codes. The array $\mathbf{G}_1$ consists of four instances of a $(14,10,1)$ systematic MDS code. The array $\mathbf{G}_2$ is obtained after the piggybacking step, and the array $\mathbf{G}_3$ is obtained after the transformation step. The red symbols in $\mathbf{G}_3$ are introduced by the transformation step.}
	\label{fig:0.1}
\end{figure*}

The construction proceeds in three steps.

First, we generate the $14\times 4$ array $\mathbf{G}_1$ from four $(14,10,1)$ systematic MDS code instances. For $i,j\in\{1,2,3,4\}$, let $\{a_{v,i}\}_{v=1}^{10}$ denote the $10$ data symbols in the $i$-th instance, let $\mathbf{a}_i=(a_{1,i},\ldots,a_{10,i})^T$, and let
\[
\mathbf{P}_j=(\alpha^j,\alpha^{2j},\ldots,\alpha^{10j})^T,
\]
where $\alpha$ is a primitive element of $\mathbb{F}_{2^8}$. Then
\[
(a_{1,i},\ldots,a_{10,i},\mathbf{P}_1^T\mathbf{a}_i,\ldots,\mathbf{P}_4^T\mathbf{a}_i)
\]
forms the $i$-th systematic MDS code instance.

Second, we add five piggyback functions to selected parity symbols in $\mathbf{G}_1$ to obtain $\mathbf{G}_2$. In this example, the piggyback functions are
\begin{eqnarray}
    &&\mathbf{q}_{1,1}=(\alpha,\alpha^2,\alpha^3,\alpha^4,0,0,0,0,0,0)^T,\nonumber\\
    &&\mathbf{q}_{1,2}=(0,0,0,0,\alpha^5,\alpha^6,\alpha^7,0,0,0)^T,\nonumber\\
    &&\mathbf{q}_{2,1}=(\alpha^2,\alpha^4,\alpha^6,\alpha^8,0,0,0,0,0,0)^T,\nonumber\\
    &&\mathbf{q}_{2,2}=(0,0,0,0,\alpha^{10},\alpha^{12},\alpha^{14},0,0,0)^T,\nonumber\\
    &&\mathbf{q}_{3,1}=(\alpha^3,\alpha^6,\alpha^9,\alpha^{12},0,0,0,0,0,0)^T.\nonumber
\end{eqnarray}

Third, we apply the conjugate transformation to all $16$ parity symbols of $\mathbf{G}_2$ to obtain the final array $\mathbf{G}_3$. By Theorem~\ref{th:1} in Section~\ref{sec:2.2}, the resulting code is MDS.

The array $\mathbf{G}_2$ remains a piggybacking code, whereas $\mathbf{G}_3$ does not. This is because the transformation from $\mathbf{G}_2$ to $\mathbf{G}_3$ couples symbols across all four parity nodes, while the conventional piggybacking framework only allows invertible transformations within an individual node \cite{2017Piggybacking}. This cross-parity coupling is the key mechanism that reduces the repair bandwidth of parity nodes in the proposed construction.

We next outline the repair process. Suppose that node $1$ in $\mathbf{G}_3$ fails. The repair first downloads $10$ symbols, namely $\{a_{i,4}\}_{i=2}^{10}$ and $\mathbf{P}_4^T\mathbf{a}_4$, to recover the erased symbol $a_{1,4}$ by the MDS property, and then obtains $\{\mathbf{P}_j^T\mathbf{a}_4\}_{j=1}^{3}$. For each $v\in\{1,2,3\}$, the repair further downloads two transformed parity symbols to solve for $\mathbf{P}_4^T\mathbf{a}_v$ and $\mathbf{P}_v^T\mathbf{a}_4+\mathbf{q}_{v,1}^T\mathbf{a}_v$, and finally downloads three systematic symbols $a_{2,v},a_{3,v},a_{4,v}$ to reconstruct the erased symbol $a_{1,v}$. Therefore, the total repair bandwidth of node $1$ is
\[
10+3\times(2+3)=25
\]
symbols. The same procedure gives a repair bandwidth of $25$ symbols for each of nodes $2$--$4$, $28$ symbols for each of nodes $5$--$7$, and $34$ symbols for each of nodes $8$--$10$.

Now suppose that parity node $11$ fails. The repair first downloads the $10$ symbols $\{a_{i,1}\}_{i=1}^{10}$ to recover $\mathbf{P}_1^T\mathbf{a}_1$ by the MDS property, and then obtains $\{\mathbf{P}_j^T\mathbf{a}_1\}_{j=2}^{4}$. It then downloads three additional parity expressions to reconstruct the remaining symbols in the failed parity node. The total repair bandwidth of node $11$ is therefore $13$ symbols. Similarly, the repair bandwidths of nodes $12$--$14$ are $13$, $19$, and $25$ symbols, respectively.

A key observation is that nodes $11$ and $12$ attain the MSR lower bound
\[
\frac{(n-1)\ell}{n-k}=\frac{13\times 4}{4}=13.
\]
This example illustrates the central design principle of the proposed code. The piggybacking step reduces the repair bandwidth of data nodes, while the conjugate transformation drives some parity nodes to optimal repair bandwidth. Their combination explains why conjugate-piggybacking codes can achieve lower repair bandwidth than existing related piggybacking constructions.

\section{Conjugate-piggybacking Codes}
\label{sec:2}
Throughout this paper, let $[u]:=\{1,2,\cdots,u\}$ for any positive integer $u$, and let $[u,v]:=\{u,u+1,\cdots,v\}$ for any positive integers $u<v$.
\subsection{Construction}
\label{sec:2.1}
The proposed conjugate-piggybacking codes are constructed in the following three steps. 

	\textbf{Step 1: Form array $\mathbf{G}_1$ containing $r$ basic systematic MDS code instances.} 
	
	 For positive integers $n,k$ with $n>k$, let $r=n-k$ and form an $n\times r$ array $\mathbf{G}_1$, where each row represents a storage node. Label the $n$ rows (or nodes) from 1 to $n$, and label the $r$ columns from 1 to $r$. Let $\{a_{i,j}\}_{i\in[k],j\in[r]}$ denote the $kr$ data symbols, where $a_{i,j}$ is placed in row $i$ and column $j$ of $\mathbf{G}_1$. Let $\mathbf{a}_i:=(a_{1,i},a_{2,i},\cdots,a_{k,i})^T$ for $i\in[r]$. 
  %we can see that $\mathbf{a}_i$ is the $k\times1$ data vector formed by all $k$ data symbols in the column $i$ of $\mathbf{G}_1$. 
	Let $\mathbf{P}_i=(\alpha^i,\alpha^{2i},\cdots,\alpha^{ki})^T$ for $i\in[r]$, where $\alpha$ is a primitive element of $\mathbb{F}_{2^q}$. 
 %Fill the rest of $\mathbf{G}_1$, specifically, for $i,j\in[r]$, place the symbol 
 The symbol $\mathbf{P}_i^T\mathbf{a}_j$ is placed in row $k+i$ and column $j$ of $\mathbf{G}_1$. For each $i\in[r]$, $(a_{1,i},a_{2,i},\cdots,a_{k,i},\mathbf{P}_1^T\mathbf{a}_i,\cdots,\mathbf{P}_r^T\mathbf{a}_i)$ is a codeword of an $(n,k,\ell=1)$ systematic MDS code. 
	
	\textbf{Step 2: Piggybacking step: form array $\mathbf{G}_2$ by designing piggybacking functions for data symbols.} 
	
	Let $L$ be a positive integer with $2\leq L\leq r$. Divide the $k$ data nodes into $L$ groups $\Phi_1,\Phi_2,\cdots,\Phi_L$ as evenly as possible, and let $n_i$ be the number of nodes in $\Phi_i$, for $i\in[L]$. Specifically, for $i\in[L]$, we take $$n_i=\begin{cases}
	\left \lceil \frac{k}{L} \right \rceil, & \text{ if } i\leq k-\left \lfloor \frac{k}{L} \right \rfloor L; \\ 
	\left \lfloor \frac{k}{L} \right \rfloor, & \text{ if } i>k-\left \lfloor \frac{k}{L} \right \rfloor L.
	\end{cases}$$

	For $i\in[r]$, 
% the $k\times 1$ vector $\mathbf{P}_i$ is also correspondingly divided into $L$ segments, i.e., we can write $\mathbf{P}_i$ as $\mathbf{P}_i=\sum_{t\in[L]}\mathbf{q}_{i,t}$, where the $k\times1$ vector 
let
 \begin{eqnarray}\nonumber
     &&\mathbf{q}_{i,t}=(0,0,\cdots,0,\alpha^{(\sum_{v=0}^{t-1}n_v+1)i},\alpha^{(\sum_{v=0}^{t-1}n_v+2)i},\\  \nonumber
     &&\cdots,\alpha^{(\sum_{v=0}^{t}n_v)i},0,0,\cdots,0)^T,\forall t\in[L], \nonumber
 \end{eqnarray}
	where $n_0:=0$.
 Then $\mathbf{P}_i=\sum_{t\in[L]}\mathbf{q}_{i,t}$.
	For $t\in[L]$, denote by $\mathbf{M}_t$ the rectangular region formed by the $n_{t}(r-t)$ data symbols in rows $(\sum_{v=0}^{t-1}n_v+1)$ to $(\sum_{v=0}^{t}n_v)$ and the first $r-t$ columns of $\mathbf{G}_1$.
	For $t\in[L-1]$, we design $r-t$ piggyback functions for the data symbols in $\mathbf{M}_t$ and add them to the $r-t$ parity symbols $\{\mathbf{P}_i^T\mathbf{a}_{r-t+1}\}_{i\in[r-t]}$ in column $r-t+1$ of $\mathbf{G}_1$. Specifically, for $j\in[r-t]$, the piggyback function $\mathbf{q}_{j,t}^T\mathbf{a}_{j}$ is added to the parity symbol $\mathbf{P}_{j}^T\mathbf{a}_{r-t+1}$. Let $\mathbf{G}_2$ denote the resulting array after the piggybacking step. In $\mathbf{G}_2$, denote by $\mathbf{R}$ the square region composed of all $r^2$ parity symbols stored in the $r$ parity nodes, and let $R(i,j)$ be the symbol in row $k+i$ and column $j$ of $\mathbf{R}$, for $i,j\in[r]$.
    The structure of $\mathbf{G}_2$ is shown in Fig.~\ref{fig:1}.% to clearly show the design idea of piggyback functions.
	
	\textbf{Step 3: Transformation step: form array $\mathbf{G}_3$ by replacing each pair of diagonally symmetric symbols in $\mathbf{R}$ with their linear combinations.}
	
	%On the basis of $\mathbf{G}_2$, we transform the pairwise parity symbols in region $\mathbf{R}$ which are located off the diagonal and symmetrical with the diagonal. 
 Specifically, for $i,j\in[r]$, define $P(i,j)$ as
	\begin{eqnarray}
		&&P(i,j)=\begin{cases}
			R(i,j)+\alpha\cdot R(j,i), & \text{ if } i<j; \\ 
			R(i,j), & \text{ if } i=j; \\ 
			R(i,j)+R(j,i), & \text{ if } i>j. 
		\end{cases}\label{eq:1}
	\end{eqnarray} 
 We replace each symbol $R(i,j)$ in $\mathbf{R}$ with $P(i,j)$ and denote the resulting array by $\mathbf{G}_3$. Note that for $i,j\in[r]$ with $i<j$, the pair $R(i,j),R(j,i)$ can be recovered from $P(i,j),P(j,i)$ as
		\begin{eqnarray}
		&&\begin{pmatrix}
			R(i,j)\\ 
			R(j,i)
		\end{pmatrix}=\begin{pmatrix}
			1 & \alpha\\ 
			1 & 1
		\end{pmatrix}^{-1}\begin{pmatrix}
			P(i,j)\\ 
			P(j,i)
		\end{pmatrix}.\label{eq:2}
	\end{eqnarray}

\begin{figure}[htpb]
	\centering
	\includegraphics[width=1.0\linewidth]{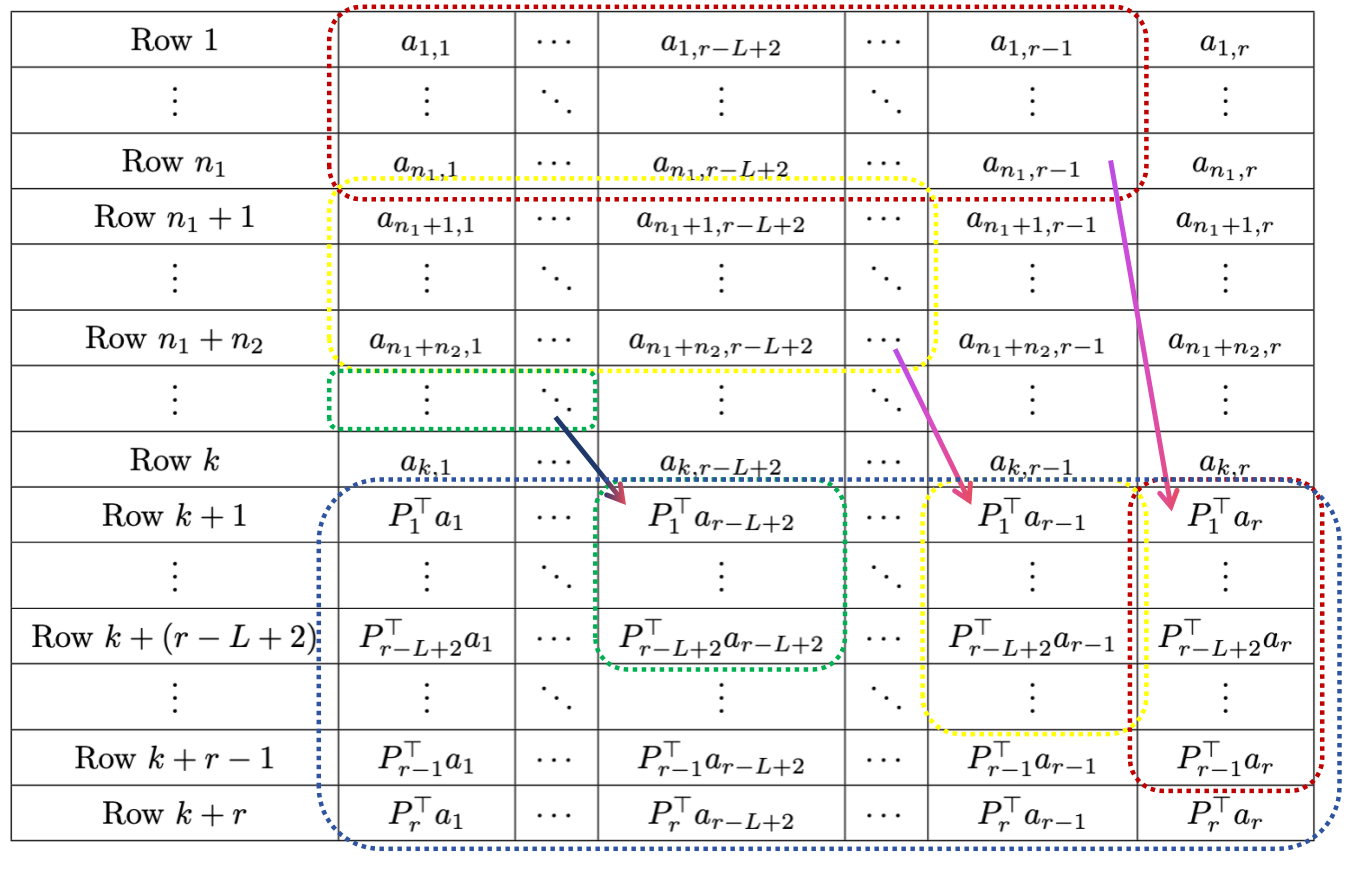}
	\caption{The structure of $\mathbf{G}_2$, where the blue part represents the region $\mathbf{R}$, the red part represents the region $\mathbf{M}_1$, and the yellow part represents the region $\mathbf{M}_2$.}
	\label{fig:1}
\end{figure}
%And for $i\in[r]$, $P(i,i)=R(i,i)=\mathbf{P}_i^T\mathbf{a}_{i}$.

%After the above three steps, we call the final array 
The obtained array $\mathbf{G}_3$ is the proposed conjugate-piggybacking code, denoted by $\mathcal{C}(n,k,L)$. 
%It can be verified that 
The example in Fig.~\ref{fig:0.1} corresponds to $\mathcal{C}(14,10,3)$.%, and in order to show the transformation in step 3 more clearly, we mark the increment of parity symbols located in $\mathbf{R}$ area in $\mathbf{G}_3$ caused by the transformation in step 3 as red.

\subsection{MDS Property}
\label{sec:2.2}
We next establish a sufficient condition under which the proposed codes are MDS.%this section, we prove that code $\mathcal{C}(n,k,L)$ has $(n,k)$ MDS property for any parameters $n,k,L$.

\begin{theorem}
    \label{th:1}
        If the field size satisfies $q> kr^2$, then the codes $\mathcal{C}(n,k,L)$ are MDS codes.
    \end{theorem}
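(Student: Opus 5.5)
The plan is to reduce the MDS property to the nonvanishing of finitely many determinants, viewed as polynomials in $\alpha$ over $\mathbb{F}_2$. The key fact is that $\alpha$ is primitive in $\mathbb{F}_{2^q}$, so its minimal polynomial over $\mathbb{F}_2$ has degree $q$. Hence no nonzero polynomial $f\in\mathbb{F}_2[x]$ with $\deg f<q$ vanishes at $\alpha$. With $q>kr^2$, it therefore suffices to show two things for every erasure pattern of $r$ nodes. First, the decoding determinant, written as $D(x)$ with $\alpha$ replaced by an indeterminate $x$, is a nonzero element of $\mathbb{F}_2[x]$. Second, $\deg D\le kr^2$.

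\textbf{Setup.} Fix an erasure pattern consisting of a set $E\subseteq[k]$ of $e$ data nodes and $r-e$ parity nodes. Let $S\subseteq[r]$ be the set of surviving parity indices, so $|S|=e$. After subtracting the contributions of the surviving data symbols, the $er$ surviving parity symbols $\{P(i,j)\}_{i\in S,j\in[r]}$ form a square linear system in the $er$ unknowns $\{a_{h,c}\}_{h\in E,c\in[r]}$. Every coefficient is a sum of at most four monomials of the form $x^{h\cdot m}$ or $x^{h\cdot m+1}$ with $m\in[r]$. These come from $\mathbf{P}_i$, from $\mathbf{q}_{i,t}$ (whose nonzero entries coincide with those of $\mathbf{P}_i$), and from the factor $\alpha$ in~(\ref{eq:1}).

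\textbf{Degree bound.} I would bound the degree by looking at a single term of the Leibniz expansion. Such a term picks exactly one entry in each unknown-column $(h,c)$, and that entry has exponent at most $hr+1$. The $r$ columns belonging to a fixed erased node $h$ can be grouped together. Because a Vandermonde-type factor can be pulled out of each group, I expect the exponent attributed to node $h$ to be at most about $h\cdot(1+2+\cdots+r)$ plus lower-order terms, rather than $r\cdot hr$. Summing over $h\in E$ should then give $\deg D\le kr^2$. If this refinement fails, I would fall back to the crude bound $\sum_{h\in E} r(hr+1)$ and look for a common power of $x$ to factor out of each row. I expect this bookkeeping to be routine but fiddly.

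\textbf{Nonvanishing of $D(x)$.} This is the main obstacle. My first approach is to prove the statement over the rational function field $\mathbb{F}_2(x)$, where $x$ is transcendental, by an explicit peeling decoder.

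1. For every pair $i<j$ with both $i,j\in S$, invert the transformation via~(\ref{eq:2}). The matrix there has determinant $1+x\neq0$, so this recovers $R(i,j)$ and $R(j,i)$.
2. The diagonal symbols $R(i,i)=\mathbf{P}_i^T\mathbf{a}_i$ for $i\in S$ are available directly.
3. The remaining symbols $P(i,j)$ with $i\in S$ and $j\notin S$ are mixtures of one surviving and one erased parity row. Column $1$ of $\mathbf{G}_2$ carries no piggyback and the piggybacks only involve earlier columns, so I would decode the columns in order $c=1,2,\dots,r$. At column $c$, the unknowns $\mathbf{a}_c$ restricted to $E$ enter an $e\times e$ system. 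Its coefficient matrix is a sum of a Vandermonde block from the rows $\mathbf{P}_i$ and a block from the partner rows $\mathbf{P}_j$, possibly scaled by $x$, together with contributions from already-decoded columns.

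To see that each such block determinant is nonzero in $\mathbb{F}_2[x]$, I would compare extreme-degree monomials. The factor $x$ multiplies exactly one side of each asymmetric pair, and $\mathbf{P}_i$ and $\mathbf{P}_j$ have distinct exponent patterns for $i\ne j$. So the lowest-degree (or highest-degree) monomial should come from a unique permutation and therefore cannot cancel. The worry is that this extremal-term argument may not isolate a unique term in every configuration of $S$. If it does not, I would instead specialize to a convenient $x$ (for example, a large transcendence-degree substitution or a generic element of an extension field) and check the system block by block there. Once $D\not\equiv0$ and $\deg D\le kr^2<q$ are established, $D(\alpha)\neq0$ for every erasure pattern, and the codes $\mathcal{C}(n,k,L)$ are MDS.
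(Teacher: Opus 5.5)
You have the right tool: a nonzero polynomial in $\mathbb{F}_2[x]$ of degree $<q$ cannot vanish at a primitive $\alpha\in\mathbb{F}_{2^q}$. The gap is that you apply it to the global $er\times er$ determinant $D(x)$, and the bound $\deg D\le kr^2$ that you aim for is false. Write $S=\{s_1<\dots<s_e\}$ and $E=\{h_1<\dots<h_e\}$. Unmix each surviving pair $P(s_u,s_w),P(s_w,s_u)$; each pair contributes a factor $1+x$. After this the system is block triangular by data column, so $D(x)=(1+x)^{\binom{e}{2}}\prod_{c=1}^{r}\det B_c(x)$, where each $e\times e$ block determinant has degree exactly $\sum_u s_uh_u$ (see below). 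Hence $\deg D=\binom{e}{2}+r\sum_u s_uh_u$, which is of order $kr^3/2$ when $e=r$; your own estimate $\sum_{h\in E}h\,r(r+1)/2$ already shows this. For instance, take $r=L=2$, erased data nodes $k-1,k$, and both parity nodes surviving. Then $D(x)=(1+x)^3x^{6k-4}$, of degree $6k-1>4k=kr^2$. Pulling powers of $x$ out of rows does not rescue this in general, since the spans of the block determinants also add up over the $r$ columns. The missing step is to apply the criterion factor by factor, as the paper does. The minimal polynomial of $\alpha$ is irreducible of degree $q$, so it divides neither $1+x$ nor any nonzero $\det B_c$ of degree $\le kr^2<q$, and therefore does not divide their product.

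Second, decoding in the order $c=1,2,\dots,r$ works for $\mathbf{G}_2$ but not for $\mathbf{G}_3$. A surviving $P(i,j)$ with $i\in S$, $j\notin S$ also contains $R(j,i)$, whose term $\mathbf{P}_j^T\mathbf{a}_i$ ties column $j$ to column $i$, possibly with $i>j$. If $1\notin S$, the only symbol involving column $1$ alone is the erased $P(1,1)$, so column $1$ cannot be decoded first. The order that works is the paper's. First decode the columns $s_1<\dots<s_e$ from the $e^2$ symbols $P(s_u,s_w)$ unmixed by \eqref{eq:2}, in increasing order, so that the piggyback in $R(s_u,s_w)$, $u<w$, involves only the already decoded $\mathbf{a}_{s_u}$. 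Then decode each $v\notin S$: with all $\mathbf{a}_{s_u}$ known, $P(s_u,v)$ yields $\mathbf{P}_{s_u}^T\mathbf{a}_v$ if $s_u<v$, and $(\mathbf{P}_{s_u}+\overline{\mathbf{q}})^T\mathbf{a}_v$ with $\overline{\mathbf{q}}\in\{\mathbf{0},\mathbf{q}_{v,1},\dots,\mathbf{q}_{v,L-1}\}$ if $s_u>v$. The factor $\alpha$ never enters these blocks, so it is not what prevents cancellation. What does is that every entry is $x^{hs_u}$, plus possibly $x^{hv}$ with $hv<hs_u$. So the top monomial of each block determinant equals that of the generalized Vandermonde matrix $[x^{hs_u}]$. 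By the rearrangement inequality, that monomial arises only from the identity pairing of $s_1<\dots<s_e$ with $h_1<\dots<h_e$, so it has coefficient $1$ and the block determinant is nonzero.
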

    \begin{proof}
        It suffices to show that $\mathcal{C}(n,k,L)$ can 
        recover from any $r=n-k$ erased nodes.
        Without loss of generality, suppose that the $r$ erased nodes consist of 
        $t$ data nodes and $r-t$ parity nodes, where $0\leq t\leq r$. When $t=0$, all erased nodes are parity nodes, and the erased symbols can be recovered directly from the encoding procedure. We therefore consider $1\leq t\leq r$ and suppose that the 
        surviving $t$ parity nodes are $\{k+i_j\}_{j\in[t]}$, where $1\leq i_1<i_2<\cdots<i_t\leq r$.  
        Let $\{j_u\}_{u\in[t]}$ denote the $t$ erased data nodes, where $1\leq j_1<j_2<\cdots<j_{t}\leq k$. 
        
        We recover all erased data symbols in the $t$ erased data nodes through the following steps.

            \textbf{For all $v\in[t]$, recover data symbols $\{a_{j_u,i_v}\}_{u\in[t]}$.} First, we 
            download $t^2$ symbols $\{P(i_u,i_v)\}_{u,v\in[t]}$ to compute all 
            $\{R(i_v,i_v)\}_{v\in[t]}$ according to Eq.~\eqref{eq:2}.
            Since $R(i_u,i_1)=\mathbf{P}_{i_u}^T\mathbf{a}_{i_1}$ for $u\in[t]$, we can solve the erased data symbols
            $\{a_{j_u,i_1}\}_{u\in[t]}$ in column $i_1$ by the MDS property. Recall that, in the piggybacking step, $R(i_1,i_u)$ is a 
            linear combination of $\mathbf{a}_{i_1}$ and $\mathbf{P}_{i_1}^T\mathbf{a}_{i_u}$ for $u\in[2,t]$. Hence, we can obtain $\{\mathbf{P}_{i_1}^T\mathbf{a}_{i_u}\}_{u\in[2,t]}$.% by subtracting the part related             to $\mathbf{a}_{i_1}$ from $R(i_1,i_u)$.
            For $u\in[2,t]$, we have $R(i_u,i_2)=\mathbf{P}_{i_u}^T\mathbf{a}_{i_2}$, and $\mathbf{P}_{i_1}^T\mathbf{a}_{i_2}$ has already been obtained.% from the previous process. 
Therefore, we can solve all erased data symbols
            $\{a_{j_u,i_2}\}_{u\in[t]}$ in column $i_2$ by the MDS property. Similarly, since 
            $R(i_2,i_u)$ is a linear combination of $\mathbf{a}_{i_2}$ and $\mathbf{P}_{i_2}^T\mathbf{a}_{i_u}$ for $u\in[3,t]$ in the piggybacking step, 
            we can obtain $\{\mathbf{P}_{i_2}^T\mathbf{a}_{i_u}\}_{u\in[3,t]}$.
For $u\in[3,t]$, we can solve all $\{\mathbf{P}_{i_v}^T\mathbf{a}_{i_u}\}_{v\in[t]}$. 
            Therefore, for $u\in[3,t]$, %combining $\{\mathbf{P}_{i_v}^T\mathbf{a}_{i_u}\}_{v\in[t]}$, 
            we 
            can recover all $t$ erased data symbols $\{a_{j_v,i_u}\}_{v\in[t]}$ in column $i_u$ according to 
            the MDS property. 
            Thus, for every $v\in[t]$, all erased data symbols $\{a_{j_u,i_v}\}_{u\in[t]}$ in column $i_v$ are recovered.

            \textbf{For $r\geq v>i_t$, recover data symbols $\{a_{j_u,v}\}_{u\in[t]}$.} From Eq.~\eqref{eq:1}, 
            for $v\in[i_{t}+1,r]$ and $u\in[t]$, $P(i_u,v)=R(i_u,v)+\alpha\cdot R(v,i_u)$. Recall that, in the piggybacking step, $R(v,i_u)=\mathbf{P}_{v}^T\mathbf{a}_{i_u}$ 
            and $R(i_u,v)$ is a linear combination of $\mathbf{a}_{i_u}$ and $\mathbf{P}_{i_u}^T\mathbf{a}_{v}$. Since $\mathbf{a}_{i_u}$ 
            has been recovered, we can obtain $\mathbf{P}_{i_u}^T\mathbf{a}_{v}$. Hence, for $v\in[i_{t}+1,r]$, we have recovered $\{\mathbf{P}_{i_u}^T\mathbf{a}_{v}\}_{u\in[t]}$, which 
            means that all $t$ erased data symbols $\{a_{j_u,v}\}_{u\in[t]}$ in column $v$ can be solved by the MDS property. 

            \textbf{For $v\in U$, where $U:=[i_t]\setminus\{i_u\}_{u\in[t]}$, recover data symbols $\{a_{j_u,v}\}_{u\in[t]}$.}
            %According to the analyses of the first two steps, in order to repair all the failed data symbols, we only 
It remains to recover the erased data symbols in the column set $U$, namely $\{a_{j_u,v}\}_{u\in[t],v\in U}$.
            For $v\in U$, suppose $i_u<v<i_{u+1}$, where $u\in\{0,1,\cdots,t-1\}$ and $i_0=0$. From Eq.~\eqref{eq:1}, for  $j\in[t]$, we have
            \begin{equation}\nonumber
                P(i_j,v)=\left\{\begin{array}{l}
                    R(i_j,v)+\alpha\cdot R(v,i_j)=\\(\mathbf{P}_{i_j}^T\mathbf{a}_{v}+\overline{q}_{i_j,j}^T\mathbf{a}_{i_j})+\alpha\cdot\mathbf{P}_{v}^T\mathbf{a}_{i_j}, \forall 1\leq j\leq u\\ 
                    R(i_j,v)+R(v,i_j)=\\ \mathbf{P}_{i_j}^T\mathbf{a}_{v}+(\mathbf{P}_{v}^T\mathbf{a}_{i_j}+\overline{q}_{v,j}^T\mathbf{a}_v), \forall u+1\leq j\leq t
                \end{array}\right .\label{eq:3}
            \end{equation}
            where $\overline{q}_{i_j,j}^T\in\{0,q_{i_j,1}^T,q_{i_j,2}^T, \cdots ,q_{i_j,L-1}^T\}$ for $j\in[u]$ and $\overline{q}_{v,j}^T\in\{0,q_{v,1}^T,q_{v,2}^T,\cdots,q_{v,L-1}^T\}$ 
            for $j\in[u+1,t]$. Recall that all $\{\mathbf{a}_{i_j}\}_{j\in[t]}$ have been recovered in the first step. Therefore, for $j\in[t]$, according to Eq.~\eqref{eq:3}, we subtract 
            the part related to $\mathbf{a}_{i_j}$ from $P(i_j,v)$ and obtain the following $t$ symbols: $$\{\mathbf{P}_{i_1}^T\mathbf{a}_{v},\cdots,\mathbf{P}_{i_u}^T\mathbf{a}_{v},
            (\mathbf{P}_{i_{u+1}}^T+\overline{q}_{v,u+1}^T)\mathbf{a}_{v},\cdots,(\mathbf{P}_{i_{t}}^T+\overline{q}_{v,t}^T)\mathbf{a}_{v}\},$$ which are used to solve $\{a_{j_u,v}\}_{u\in[t]}$. For ease of notation, for $v\in U$, define the $t\times k$ matrix $\mathbf{M}_v$ as $$\mathbf{M}_v:=[\mathbf{P}_{i_1},\cdots,\mathbf{P}_{i_u},\mathbf{P}_{i_{u+1}}+\overline{q}_{v,u+1},\cdots,\mathbf{P}_{i_{t}}+\overline{q}_{v,t}]^T.$$
            It is equivalent to show that the submatrix, denoted by 
            $\mathbf{\overline{M}}_{v}$, composed of the $t$ rows and the columns $j_1,j_2,\cdots,j_t$ of $\mathbf{M}_v$ is invertible.

            %Next, we will show that $\mathbf{\overline{M}}_{v}$ is actually invertible. For the convenience of analysis, 
We define $\mathbf{M'}_v:=[\mathbf{P}_{i_1},\mathbf{P}_{i_2},\cdots,\mathbf{P}_{i_{t}}]^T$ 
            and let $\mathbf{\overline{M'}}_{v}$ be the submatrix composed of the $t$ rows and the columns $j_1,j_2,\cdots,j_t$ of $\mathbf{M'}_v$. By the MDS property, $\mathbf{\overline{M'}}_{v}$ is invertible. Note that for $j\in[u+1,t],\overline{q}_{v,j}^T\in\{0,q_{v,1}^T,q_{v,2}^T,\cdots,q_{v,L-1}^T\}$ 
            and $\mathbf{P}_i=(\alpha^i,\alpha^{2i},\cdots,\alpha^{ki})^T=\sum_{t\in[L]}\mathbf{q}_{i,t}$ for $i\in[r]$. Therefore, for $\delta\in[k]$, the $\delta$-th element of the $k\times1$ vector $\mathbf{P}_{i_{j}}+\overline{q}_{v,j}$  
            has the form $\alpha^{\delta\cdot i_j}+g_{v,\delta}(\alpha)$, where $g_{v,\delta}(\alpha)\in\{0,\alpha^{\delta\cdot v}\}$. Since $v<i_{u+1}\leq i_{j}$ for $j\in[u+1,v]$, we have $\deg(g_{v,\delta}(\alpha))\leq \delta\cdot v<\delta\cdot i_j$. 
            Here, for a polynomial $f(\alpha)$, $\deg(f(\alpha))$ denotes the degree of $f(\alpha)$. Thus, for $j\in[u+1,t],\delta\in[k]$, we have $\deg(\alpha^{\delta\cdot i_j}+g_{v,\delta}(\alpha))=\deg(\alpha^{\delta\cdot i_j})=\delta\cdot i_j$. 
            Then, for $i_j\in[u+1,t],\delta\in[k]$, we have $\deg(\alpha^{\delta\cdot i_j}+g_{v,\delta}(\alpha))=\deg(\alpha^{\delta\cdot i_j})=\delta\cdot i_j \leq kt \leq kr$ because $1\leq t \leq r$. 
            The determinant det$(\overline{M'}_v)$ is a polynomial in $\alpha$. Since each product term in the determinant polynomial contains at most $r$ factors of the form $\alpha^{\delta\cdot i_j}$ or $\alpha^{\delta\cdot i_j}+g_{v,\delta}(\alpha)$, 
            and the degree of each factor is less than $kr$, the degree of each product term in the determinant polynomial is at most $kr^2$. By [1, Theorem 1.2], %if the field size $q$ is larger than the degree of each $\alpha$ in the determinants multiplication, there exists at least one value for $\alpha$ such that the evaluation of the determinants multiplication is non-zero. Therefore, 
            if the field size $q$ is larger than $kr^2$, the determinant det$(\overline{M'}_v)$ is nonzero, i.e., the matrix $\overline{M'}_v$ is invertible. 

For any $v\in[r]$, all $t$ erased data symbols in column $v$ have been recovered, and the theorem is proved.%which means that the original information has been recovered, thus completing the proof of the MDS property of code $\mathcal{C}(n,k,L)$.
    \end{proof}

\section{Repair Process of Single-Node Erasure}
\label{sec:3}
In this section, we present the repair process for any single-node erasure of the proposed codes 
$\mathcal{C}(n, k, L)$. 
%We first consider the repair process of the data nodes.% and the parity nodes separately.

\subsection{Repair Process of Single Data Node}\label{sec:3.1}
Suppose that data node $f\in \Phi_i$ is erased, where $f\in [k]$ 
and $i \in [L-1]$. The repair procedure for node $f$ is as follows.

\begin{enumerate}
	\item[Step 1]  We download all $(k-1)i$ surviving data symbols in the last $i$ columns of the first $k$ rows 
    and $i$ parity symbols $\{P(r+1-u,r+1-u)=\mathbf{P}_{r+1-u}^T\mathbf{a}_{r+1-u}\}_{u\in[i]}$
    to recover the last $i$ erased symbols $\{a_{f,j}\}_{j\in[r+1-i,r]}$ in $f$ by the MDS property. Meanwhile, we can also compute $\{\mathbf{P}_u^T\mathbf{a}_{r+1-i}\}_{u \in [r-i]}$.
	
	\item[Step 2] For $v \in [r-i]$, we download the following two parity symbols:
	% \[
	% P(r+1-i,v)=\mathbf{P}_{r+1-i}^T\mathbf{a}_v + \mathbf{P}_v^T\mathbf{a}_{r+1-i} + \mathbf{q}^T_{v,i}\mathbf{a}_v, P(v,r+1-i)=\mathbf{P}_v^T\mathbf{a}_{r+1-i} + \mathbf{q}_{v,i}^T\mathbf{a}_v +\alpha\mathbf{P}_{r+1-i}^T\mathbf{a}_v
	% \]
        \begin{eqnarray*}\nonumber
            && P(r+1-i,v)\\&&=\mathbf{P}_{r+1-i}^T\mathbf{a}_v + \mathbf{P}_v^T\mathbf{a}_{r+1-i} + \mathbf{q}^T_{v,i}\mathbf{a}_v,\\ \nonumber
            && P(v,r+1-i)\\&&=\mathbf{P}_v^T\mathbf{a}_{r+1-i} + \mathbf{q}_{v,i}^T\mathbf{a}_v +\alpha\mathbf{P}_{r+1-i}^T\mathbf{a}_v  \nonumber
        \end{eqnarray*}
	Then, by Eq.~\eqref{eq:2}, we solve for $\mathbf{P}_{r+1-i}^T\mathbf{a}_v$ and $\mathbf{P}_v^T\mathbf{a}_{r+1-i} + \mathbf{q}^T_{v,i}\mathbf{a}_v$. %, i.e., 
%\begin{eqnarray}	 \nonumber
 %\begin{pmatrix}	\mathbf{P}_v^T\mathbf{a}_{r+1-i} + \mathbf{q}^T_{v,i}\mathbf{a}_v \\ \mathbf{P}_{r+1-i}^T\mathbf{a}_v	\end{pmatrix}=\begin{pmatrix}		1 & \alpha \\		1 &  1\\
%\end{pmatrix}^{-1}\cdot\\ \nonumber
 %\begin{pmatrix}
		%\mathbf{P}_v^T\mathbf{a}_{r+1-i} + \mathbf{q}_{v,i}^T\mathbf{a}_v +\alpha\mathbf{P}_{r+1-i}^T\mathbf{a}_v
		%\\\mathbf{P}_{r+1-i}^T\mathbf{a}_v + \mathbf{P}_v^T\mathbf{a}_{r+1-i} + \mathbf{q}^T_{v,i}\mathbf{a}_v 
%	\end{pmatrix}.\nonumber
% \end{eqnarray}
Then we download the $n_i - 1$ symbols $\{a_{s,v}\}_{s \in \Phi_i\backslash\{f\}}$ and recover the erased symbol $a_{f,v}$ as   
    % $$a_{f,v}=\alpha^{-fv}\cdot((\mathbf{P}_v^T\mathbf{a}_{r+1-i} + \mathbf{q}^T_{v,i}\mathbf{a}_v)-\mathbf{P}_v^T\mathbf{a}_{r+1-i}-\sum_{s\in\Phi_i\backslash\{f\}}\alpha^{sv}a_{s,v})$$.
    \begin{eqnarray}\nonumber
        a_{f,v}&=&\alpha^{-fv}\cdot((\mathbf{P}_v^T\mathbf{a}_{r+1-i} + \mathbf{q}^T_{v,i}\mathbf{a}_v)-\mathbf{P}_v^T\mathbf{a}_{r+1-i} \\ \nonumber
        & -&\sum_{s\in\Phi_i\backslash\{f\}}\alpha^{sv}a_{s,v}). \nonumber
    \end{eqnarray}
\end{enumerate}
The resulting repair bandwidth of node $f$ is $ki + 2(r-i)+(r-i)(n_i-1)=ki+(r-i)(n_i+1)$ symbols.

Suppose that the data node $f\in \Phi_L$ is erased. The repair process is as follows.

\begin{enumerate}
	\item[Step 1] We download all $(k-1)(L-1)$ surviving data symbols in the last $L-1$ columns of the first $k$ rows except row $f$, together with $L-1$ parity symbols $\{P(r+1-u,r+1-u)\}_{u \in[L-1]}$, to recover the last $L-1$ erased data symbols $\{a_{t, j}\}_{j\in[r+2-L,r]}$ in $f$ by the MDS property. 
	Meanwhile, we can also compute $\{\mathbf{P}_u^T\mathbf{a}_{r+1-\ell}\}_{u\in[r-L+1],\ell \in [L-1]}$.
	
%	\item We download $(L-1)(r - L+ 1)$ symbols in the last $L-1$ columns of row $k+s$ ($s\in[r-L+1]$) and download
%	$(L-1)(r - L+ 1)$ symbols in the last $L-1$ rows of column $k+ r+1 -s$ ($s\in[r-L+1]$). These $2(L-1)(r-L+1)$ symbols 
%	will form $(L-1)(r-L+1)$ binary equations systems of the same shape as Eq. \ref{eq:1}. By solving these systems of binary equations, we can obtain $\{\mathbf{P}_s^T\mathbf{a}_{r+1-j} 
%	+ \mathbf{q}^T_{s,j}\}_{s \in [r-L+1], j \in [L-1]}$. We download $r-L+1$ symbols $\{\mathbf{P}^T_s\mathbf{a}_s\}_{s \in [r-L+1]}$.
%	For $s = 1,2,\cdots,r-L+1$, we perform an invertible transformation by subtracting the summation of the $L-1$ symbols $\{\mathbf{P}_s^T\mathbf{a}_{r+1-j} 
%	+ \mathbf{q}^T_{s,j}\}_{j \in [L-1]}$ from the symbol $\mathbf{P}^T_s\mathbf{a}_s$. Since we have that
%	\begin{equation}\nonumber
%		\begin{split}\label{Eq:1}
%			\mathbf{P}_s^T\mathbf{a}_{s} - \sum_{\ell= 1}^{L - 1}(\mathbf{P}_s^T\mathbf{a}_{r - \ell + 1} + \mathbf{q}_{s,\ell}^T\mathbf{a}_{s})
%			= \mathbf{q}_{s,L}^T\mathbf{a}_{s} - \sum_{\ell =1}^{L-1}\mathbf{P}_s^T\mathbf{a}_{r-\ell + 1},
%		\end{split}
%	\end{equation}
%    we get $\{\mathbf{q}_{s,L}^T\mathbf{a}_{s}- \sum_{\ell =1}^{L-1}\mathbf{P}_s^T\mathbf{a}_{r-\ell + 1}\}_{s \in [r-L+1]}$. Then we download the symbols in the first $r - L + 1$ columns from all rows in $\Phi_L$ except row $t$, together with $\{\mathbf{P}_s^T\mathbf{a}_{r+1-\ell}\}_{s\in[r-L+1],\ell \in [L-1]}$, we can recover $\{a_{t,j}\}_{j \in [r+1-L]}$. The repair bandwidth of node $t$ is $k(L-1)+(r-L+1)n_L + 2(L-1)(r-L+1)$ symbols.
    
    \item[Step 2] For $v \in [r-L+1]$ and $u \in [r-L+2, r]$, we download the following two parity symbols:
    \begin{eqnarray}\nonumber
    	&& P(u,v)=\mathbf{P}_u^T\mathbf{a}_{v} + \mathbf{P}_v^T\mathbf{a}_u +\mathbf{q}^T_{v,r+1-u}\mathbf{a}_v, \\ \nonumber
     &&P(v,u)=\mathbf{P}_v^T\mathbf{a}_u + \mathbf{q}_{v,r+1-u}^T\mathbf{a}_v +\alpha\mathbf{P}_u^T\mathbf{a}_v, \nonumber
    \end{eqnarray}
    By Eq.~\eqref{eq:2}, these two symbols allow us to solve for $\mathbf{P}_u^T\mathbf{a}_v$ and $\mathbf{P}_v^T\mathbf{a}_u + \mathbf{q}_{v,r+1-u}^T\mathbf{a}_v$.
    % $$\begin{pmatrix}
    % 	\mathbf{P}_v^T\mathbf{a}_u + \mathbf{q}_{v,r+1-u}^T\mathbf{a}_v \\ \mathbf{P}_u^T\mathbf{a}_v
    % \end{pmatrix}=\begin{pmatrix}
    % 	1 & \alpha \\
    % 	1 &  1\\
    % \end{pmatrix}^{-1}\cdot\begin{pmatrix}
    % 	\mathbf{P}_v^T\mathbf{a}_u + \mathbf{q}_{v,r+1-u}^T\mathbf{a}_v +\alpha\mathbf{P}_u^T\mathbf{a}_v
    % 	\\\mathbf{P}_u^T\mathbf{a}_{v} + \mathbf{P}_v^T\mathbf{a}_u +\mathbf{q}^T_{v,r+1-u}\mathbf{a}_v
    % \end{pmatrix}.$$
    Next, we download $\mathbf{P}^T_v\mathbf{a}_v$ and subtract the sum of the $L-1$ symbols $\{\mathbf{P}_v^T\mathbf{a}_u + \mathbf{q}_{v,r+1-u}^T\mathbf{a}_v\}_{u \in [r-L+2, r]}$ from $\mathbf{P}^T_v\mathbf{a}_v$ to obtain
    \begin{eqnarray}\nonumber
    	\mathbf{P}^T_v\mathbf{a}_v - \sum_{u = r-L+2}^{r}(\mathbf{P}_v^T\mathbf{a}_u + \mathbf{q}_{v,r+1-u}^T\mathbf{a}_v)  \\= \mathbf{q}^T_{v,L}\mathbf{a}_v - \mathbf{P}_v^T(\sum_{u = r-L+2}^{r}\mathbf{a}_u). \nonumber
    \end{eqnarray}
    Then, we download $n_L - 1$ symbols $\{a_{s,v}\}_{s \in \Phi_L\backslash\{f\}}$ and recover the erased symbol $a_{f,v}$ as 
    % $$a_{f,v}=\alpha^{-fv}\cdot((\mathbf{P}^T_v\mathbf{a}_v - \sum_{u = r-L+2}^{r}(\mathbf{P}_v^T\mathbf{a}_u + \mathbf{q}_{v,r+1-u}^T\mathbf{a}_v))+\mathbf{P}_v^T(\sum_{u = r-L+2}^{r}\mathbf{a}_u)-(\sum_{s \in \Phi_L\backslash\{f\}}\alpha^{sv}a_{s,v})).$$
    \begin{eqnarray*}\nonumber
         &&a_{f,v}\\&&=\alpha^{-fv}\cdot((\mathbf{P}^T_v\mathbf{a}_v - \sum_{u = r-L+2}^{r}(\mathbf{P}_v^T\mathbf{a}_u + \mathbf{q}_{v,r+1-u}^T\mathbf{a}_v))\\
        &&+\mathbf{P}_v^T(\sum_{u = r-L+2}^{r}\mathbf{a}_u)-(\sum_{s \in \Phi_L\backslash\{f\}}\alpha^{sv}a_{s,v})). \nonumber
    \end{eqnarray*}
\end{enumerate}
Therefore, the repair bandwidth of node $f$ is $k(L-1)+(r-L+1)n_L + 2(L-1)(r-L+1)$ symbols.

Continuing the example in Fig.~\ref{fig:0.1}, we have $L = 3$ and $\Phi_1 = \{1,2,3,4\},\Phi_2 = \{5,6,7\},\Phi_3 = \{8,9,10\}$. The repair process of node 1 ($\in\Phi_1$) has been described in Section~\ref{sec:1.1}. We can also show that the repair bandwidth of each node in $\Phi_1$ is 25 symbols, and that of each node in $\Phi_2$ is 28 symbols.

Suppose that node 8 ($\in\Phi_3$) is erased. First, we download 20 symbols 
\begin{align*}
	&a_{1,3},a_{2,3},a_{3,3},a_{4,3},a_{5,3},a_{6,3},a_{7,3},a_{9,3},a_{10,3},\mathbf{P}_3^T\mathbf{a}_3,\\
	&a_{1,4},a_{2,4},a_{3,4},a_{4,4},a_{5,4},a_{6,4},a_{7,4},a_{9,4},a_{10,4},\mathbf{P}_4^T\mathbf{a}_4,
\end{align*}
to recover the erased symbols $a_{8,3}, a_{8,4}$ by the MDS property. For $v \in \{1,2\}$ and $u\in\{3,4\}$, we download the following two parity symbols:
\begin{eqnarray}\nonumber
	P(u,v)=\mathbf{P}_u^T\mathbf{a}_{v} + \mathbf{P}_v^T\mathbf{a}_u +\mathbf{q}^T_{v,r+1-u}\mathbf{a}_v,\\ P(v,u)=\mathbf{P}_v^T\mathbf{a}_u + \mathbf{q}_{v,r+1-u}^T\mathbf{a}_v +\alpha\mathbf{P}_u^T\mathbf{a}_v \nonumber
\end{eqnarray}
to solve for $\mathbf{P}_u^T\mathbf{a}_v$ and $\mathbf{P}_v^T\mathbf{a}_u + \mathbf{q}_{v,r+1-u}^T\mathbf{a}_v$.
Next, we download $\mathbf{P}^T_v\mathbf{a}_v$ and subtract the sum of the two symbols $\mathbf{P}_v^T\mathbf{a}_{3} 
+ \mathbf{q}^T_{v,2}\mathbf{a}_v$ and $\mathbf{P}_v^T\mathbf{a}_{4} 
+ \mathbf{q}^T_{v,1}\mathbf{a}_v$ from $\mathbf{P}^T_v\mathbf{a}_v$ to obtain
\begin{eqnarray}\nonumber
	\mathbf{P}^T_v\mathbf{a}_v - \Big(\mathbf{P}_v^T\mathbf{a}_{3} 
	+ \mathbf{q}^T_{v,2}\mathbf{a}_v+ \mathbf{P}_v^T\mathbf{a}_{4} 
	+ \mathbf{q}^T_{v,1}\mathbf{a}_v\Big)  \\ = \mathbf{q}^T_{v,3}\mathbf{a}_v - \mathbf{P}_v^T(\mathbf{a}_3 + \mathbf{a}_4).\nonumber
\end{eqnarray}
Then we download two symbols $a_{9,v},a_{10,v}$ to recover the erased symbol $a_{8,v}$ by $a_{8,v} = \alpha^{-8v}\cdot((\mathbf{q}^T_{v,3}\mathbf{a}_v - \mathbf{P}_v^T(\mathbf{a}_3 + \mathbf{a}_4)) + \mathbf{P}_v^T(\mathbf{a}_3 + \mathbf{a}_4 ) - (\alpha^{9v}a_{9,v} + \alpha^{10v}a_{10,v}))$. Therefore, the repair bandwidth of node 8 is $20 + 14 = 34$ symbols. The same argument shows that the repair bandwidth of each node in $\Phi_3$ is also 34 symbols.

\subsection{Repair Process of Single Parity Node}\label{sec:3.2}
Suppose that node $f$ is erased, where $f\in\{k+1,k+2,\cdots,k+r\}$. The repair process is as follows.

\begin{enumerate}
	\item[Step 1] We download all $k$ data symbols $\{a_{i,f-k}\}_{i\in[k]}$ in column $f-k$ to recover the erased parity symbol $\mathbf{P}^T_{f-k}\mathbf{a}_{f-k}(=P(f-k,f-k)=R(f-k,f-k))$ by the MDS property. Meanwhile, we can compute $\{\mathbf{P}_u^T\mathbf{a}_{f-k}\}_{u\in[r]\setminus\{f-k\}}$.
	
	\item[Step 2] We download $r-1$ parity symbols $\{P(u,f-k)\}_{u\in[r]\setminus\{f-k\}}$ in column $f-k$ from rows $k + 1$ to $k + r$, excluding row $f$. According to the piggybacking step in Section~\ref{sec:2.1}, $R(u,f-k)$ can be expressed as follows. 
 \begin{figure*}[ht] 
 	\centering
        \begin{equation}
             R(u,f-k)=\begin{cases}
            \mathbf{P}_u^T\mathbf{a}_{f-k} & \text{ if } 1\leq u<f-k, k+1\leq f\leq k+r-L+1; \\ 
            \mathbf{P}_u^T\mathbf{a}_{f-k}+\mathbf{q}_{u,k+r+1-f}^T\mathbf{a}_{u} & \text{ if } 1\leq u<f-k, k+r-L+1< f\leq k+r; \\ 
            \mathbf{P}_u^T\mathbf{a}_{f-k} & \text{ if } f-k<u\leq r. 
        \end{cases}
        \label{eq:111}
        \end{equation}
\end{figure*}
Then, by downloading all data symbols used to compute the piggyback functions in $\{R(u,f-k)\}_{u\in[f-k-1]}$ during the piggybacking step, together with the $r-1$ symbols $\{\mathbf{P}_u^T\mathbf{a}_{f-k}\}_{u\in[r]\setminus\{f-k\}}$, we can recover $\{R(u,f-k)\}_{u\in[r]\setminus\{f-k\}}$. 
When $f\in [k + 1, k+r-L+1]$, there is no piggyback function in $\{R(u,f-k)\}_{u\in[f-k-1]}$. When $f\in [k+r-L+2,k+r]$, the piggyback functions in $\{R(u,f-k)\}_{u\in[f-k-1]}$ are formed by all $\Phi_{k+r+1-f}(f-k-1)$ data symbols in region $\mathbf{M}_{k+r+1-f}$. Next, for $u\in[r]$, using the relationship among $P(u,f-k),R(u,f-k)$, and $R(f-k,u)$ in Eq.~\eqref{eq:1}, we can recover $\{R(f-k,u)\}_{u\in[r]\setminus\{f-k\}}$ as follows.
 \begin{figure*}[ht] 
 	\centering
        \begin{equation}    
    R(f-k,u)=\begin{cases}
    \alpha^{-1}\cdot(P(u,f-k)-R(u,f-k)) & \text{ if } 1\leq u\leq f-k-1; \\ 
    P(u,f-k)-R(u,f-k) & \text{ if } f-k+1\leq u\leq r.
    \end{cases}
       \label{eq:222}
    \end{equation} 
\end{figure*}

    \item[Step 3]
    Using $\{R(f-k,u)\}_{u\in[r]\setminus\{f-k\}}$ and $\{R(u,f-k)\}_{u\in[r]\setminus\{f-k\}}$, we recover all $r-1$ erased parity symbols $\{P(f-k,u)\}_{u\in[r]\setminus\{f-k\}}$ according to Eq.~\eqref{eq:1} as follows.
     \begin{figure*}[ht] 
 	\centering
        \begin{equation}    
P(f-k,u)=\begin{cases}
    R(f-k,u)+R(u,f-k) & \text{ if } 1\leq u\leq f-k-1; \\ 
    R(f-k,u)+\alpha R(u,f-k) & \text{ if } f-k+1\leq u\leq r.  
    \end{cases}
       \label{eq:333}
    \end{equation} 
\end{figure*}
\end{enumerate}
Therefore, when $f\in[k+r-L+2,k+r]$, the repair bandwidth of node $f$ is $k + r - 1 + n_{k+r+1-f}(f-k-1)$ symbols; when $f\in[k+1,k+r-L+1]$, the repair bandwidth of node $f$ is $k + r - 1$ symbols.

It is worth noting that the optimal repair bandwidth of an $(n,k,\ell=n-k)$ MDS array code is $\frac{(n-1)\ell}{n-k}=n-1=k+r-1$ symbols \cite{dimakis2010}. Thus, the repair bandwidth of each of the first $r-L+1$ parity nodes of $\mathcal{C}(n,k,L)$ is optimal.

%The repair process of node 11 has been described in Section \ref{sec:1.1}, and using the similar process we can repair node 12 and the repair bandwidth of node 12 is also optimal (13 symbols). 
Consider the code $\mathcal{C}(14,10,3)$ in Fig.~\ref{fig:0.1}. Suppose that node 13 is erased. First, we download 10 data symbols $\{a_{i,3}\}_{i\in[10]}$ to recover the erased parity symbol $\mathbf{P}_3^T\mathbf{a}_3$ by the MDS property, and then compute $\{\mathbf{P}_1^T\mathbf{a}_3, \mathbf{P}_2^T\mathbf{a}_3, \mathbf{P}_4^T\mathbf{a}_3\}$. Then, we download 6 symbols $\{a_{i,1}\}_{i\in[5,7]}, \{a_{s,2}\}_{s\in[5,7]}$ to recover the piggyback functions $\mathbf{q}^T_{1,2}\mathbf{a}_1(=\alpha^{5}a_{5,1}+\alpha^{6}a_{5,1}+\alpha^{7}a_{5,1})$ and $\mathbf{q}^T_{2,2}\mathbf{a}_2(=\alpha^{10}a_{5,2}+\alpha^{12}a_{5,2}+\alpha^{14}a_{5,2})$. Finally, we download 3 parity symbols,
$P(1,3)=\mathbf{P}_1^T\mathbf{a}_3 + \mathbf{q}_{1,2}^T\mathbf{a}_1 + \alpha\mathbf{P}_3^T\mathbf{a}_1, P(2,3)=\mathbf{P}_2^T\mathbf{a}_3 + \mathbf{q}_{2,2}^T\mathbf{a}_2 + \alpha\mathbf{P}_3^T\mathbf{a}_2, P(4,3)=\mathbf{P}_4^T\mathbf{a}_3 + \mathbf{P}_3^T\mathbf{a}_4 + \mathbf{q}_{3,1}^T\mathbf{a}_3$
and recover the 3 parity symbols as follows: 
% \begin{eqnarray}\nonumber
% 	& P(3,1)=& \alpha^{-1}(P(1,3) - \mathbf{P}_1^T\mathbf{a}_3 - \mathbf{q}_{1,2}^T\mathbf{a}_1) + \mathbf{P}_1^T\mathbf{a}_3 + \mathbf{q}_{1,2}^T\mathbf{a}_1, \\ \nonumber
% 	& P(3,2) =& \alpha^{-1}(P(2,3) - \mathbf{P}_2^T\mathbf{a}_3 - \mathbf{q}_{2,2}^T\mathbf{a}_2) + \mathbf{P}_2^T\mathbf{a}_3 + \mathbf{q}_{2,2}^T\mathbf{a}_2, \\ \nonumber
% 	& P(3,4) =& P(4,3) - \mathbf{P}_4^T\mathbf{a}_3 + \alpha\mathbf{P}_4^T\mathbf{a}_3.
% \end{eqnarray}
\begin{eqnarray}\nonumber
    & P(3,1) =& \alpha^{-1}(P(1,3) - \mathbf{P}_1^T\mathbf{a}_3 - \mathbf{q}_{1,2}^T\mathbf{a}_1) \\ \nonumber 
    & & + \mathbf{P}_1^T\mathbf{a}_3 + \mathbf{q}_{1,2}^T\mathbf{a}_1, \\ \nonumber
    & P(3,2) =& \alpha^{-1}(P(2,3) - \mathbf{P}_2^T\mathbf{a}_3 - \mathbf{q}_{2,2}^T\mathbf{a}_2) \\ \nonumber
    & & + \mathbf{P}_2^T\mathbf{a}_3 + \mathbf{q}_{2,2}^T\mathbf{a}_2, \\ \nonumber
    & P(3,4) =& P(4,3) - \mathbf{P}_4^T\mathbf{a}_3 + \alpha\mathbf{P}_4^T\mathbf{a}_3.
\end{eqnarray}
The repair bandwidth of node 13 is 19 symbols. %Using the similar method, the repair bandwidths of node 14 can be obtained as 25 symbols.

\section{Repair Bandwidth}
\label{sec:4}
In this section, we analyze the repair bandwidth of $\mathcal{C}(n,k,L)$. We define the {\em average repair bandwidth ratio} of data nodes, parity nodes, or all nodes as the ratio
of the corresponding average repair bandwidth to the total number
of data symbols.
\subsection{Average Repair Bandwidth Ratio of Data Nodes}
\label{sec:4.1}
\begin{lemma}
\label{lem:2}
	If $L$ is a factor of $k$, then the average repair bandwidth ratio of data nodes
	$\gamma_{sys}$ for codes $\mathcal{C}(n,k,L)$ is
	$$\gamma_{sys}=\frac{L}{2r} + \frac{1}{L} - \frac{3}{2Lr} + \frac{1}{L^2r} + \frac{-\frac{5}{2}L^2 + 3Lr +\frac{9}{2}L -3r -2}{kLr}.$$
\end{lemma}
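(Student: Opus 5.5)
The plan is a direct computation: average the single-data-node repair bandwidths derived in Section~\ref{sec:3.1} and divide by the total number $kr$ of data symbols. Since $L$ divides $k$, the group sizes are all equal, $n_i=k/L$ for every $i\in[L]$. Each of the $L$ groups therefore contains exactly $k/L$ nodes, and the average over all $k$ data nodes equals $\frac{1}{L}$ times the sum, over the $L$ groups, of the common per-node bandwidth in each group.

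First, I record the per-node bandwidths from Section~\ref{sec:3.1} with $n_i=k/L$ substituted:
\[
B_i = ki+(r-i)\Big(\frac{k}{L}+1\Big) \quad \text{for } i\in[L-1],
\]
\[
B_L = k(L-1)+(r-L+1)\frac{k}{L}+2(L-1)(r-L+1).
\]
It follows that
\[
\gamma_{sys}=\frac{1}{kLr}\Big(\sum_{i=1}^{L-1}B_i+B_L\Big).
\]
Second, I evaluate the sum using $\sum_{i=1}^{L-1} i=\frac{L(L-1)}{2}$ and $\sum_{i=1}^{L-1}(r-i)=(L-1)r-\frac{L(L-1)}{2}$.

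I then split the resulting bracket into the part proportional to $k$ and the part independent of $k$. For the $k$-part, the contributions are:
\begin{itemize}
\item $\frac{kL(L-1)}{2}$ from $\sum_i ki$;
\item $\frac{k}{L}\big((L-1)r-\frac{L(L-1)}{2}\big)$ from the first sum;
\item $k(L-1)+\frac{k(r-L+1)}{L}$ from $B_L$.
\end{itemize}
These should collapse to $\frac{kL^2}{2}+kr-\frac{3k}{2}+\frac{k}{L}$. The constant part is $(L-1)r-\frac{L(L-1)}{2}+2(L-1)(r-L+1)$, and expanding it should give $-\frac52L^2+3Lr+\frac92L-3r-2$.

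Finally, dividing by $kLr$ yields the four leading terms $\frac{L}{2r}+\frac1L-\frac{3}{2Lr}+\frac{1}{L^2r}$ from the $k$-part, plus the constant part divided by $kLr$. This is exactly the claimed expression.

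There is no conceptual obstacle. The only step needing care is the bookkeeping in collecting the $k$-coefficients. In particular, the $-\frac{k}{2}(L-1)$ coming from $\frac{k}{L}\cdot\frac{L(L-1)}{2}$ must be combined with $-\frac{kL}{2}$, $kL$, and the two $-k$ terms (one from $\frac{k}{L}\cdot(L-1)r$ and one from $\frac{k}{L}(1-L)$ in $B_L$) to produce exactly $-\frac{3k}{2}$. I would double-check this step against the example $\mathcal{C}(14,10,\cdot)$ only when $L\mid k$ (e.g.\ $L=2$), since the example in Fig.~\ref{fig:0.1} has $L=3\nmid 10$.
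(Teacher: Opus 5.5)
Your proposal is correct and is the same direct computation as the paper: with $n_i=k/L$, the paper's $\frac{1}{k^2r}\sum_i n_iB_i$ is exactly your $\frac{1}{kLr}\sum_i B_i$, and both your $k$-part $\frac{kL^2}{2}+kr-\frac{3k}{2}+\frac{k}{L}$ and your constant part $-\frac52L^2+3Lr+\frac92L-3r-2$ check out. One trivial slip is in your bookkeeping remark: the second $-k$ comes from $k(L-1)$ in $B_L$, not from $\frac{k}{L}(L-1)r$; the latter equals $kr-\frac{kr}{L}$, and its $-\frac{kr}{L}$ cancels the $\frac{kr}{L}$ from $B_L$, so the totals are unaffected.
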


\begin{proof}
	According to the repair method in Section~\ref{sec:3.1}, the average repair bandwidth ratio of data nodes $\gamma_{sys}$ is 
 \begin{eqnarray}\nonumber
     &\gamma_{sys} = \frac{(k(L-1)+(r-L+1)n_L + 2(L-1)(r-L+1))n_L}{k^2r} \\ \nonumber
     &+\frac{\sum_{i=1}^{L-1} (ki+(r-i)n_i + (r-i))n_i}{k^2r}.
 \end{eqnarray}
	If $L$ is a factor of $k$, then $n_i = \frac{k}{L}$ for $i\in[L]$. Hence, 
\begin{eqnarray*}\nonumber
 &&\gamma_{sys} \\&&= \frac{(k(L-1)+(r-L+1)\frac{k}{L} + 2(L-1)(r-L+1))\frac{k}{L}}{k^2r} \\ \nonumber
 &&+ \frac{\sum_{i=1}^{L-1} (ki+(r-i)\frac{k}{L} + (r-i))\frac{k}{L}}{k^2r},
\end{eqnarray*}
and further obtain 
	$$\gamma_{sys} = \frac{L}{2r} + \frac{1}{L} - \frac{3}{2Lr} + \frac{1}{L^2r} + \frac{-\frac{5}{2}L^2 + 3Lr +\frac{9}{2}L -3r -2}{kLr}.$$
\end{proof}

\begin{lemma}
\label{lem:3}
	If $L$ is a factor of $k$ and $k \to \infty$, then the repair bandwidth ratio of data nodes $\gamma_{sys}$ attains its minimum when $L = \lfloor\sqrt{2r - 1}\rfloor$ or $L = \lceil\sqrt{2r-1}\rceil $.
\end{lemma}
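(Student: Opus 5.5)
Starting from Lemma~\ref{lem:2}, let $k\to\infty$ with $L$ a factor of $k$. The last term of $\gamma_{sys}$ has a numerator that depends only on $L$ and $r$ and a denominator $kLr$, so it tends to $0$. The limiting ratio is therefore
\[
f(L):=\frac{L}{2r}+\frac{1}{L}-\frac{3}{2Lr}+\frac{1}{L^2r}
=\frac{1}{r}\Big(\frac{L}{2}+\frac{2r-3}{2L}+\frac{1}{L^2}\Big),
\]
and the task reduces to minimizing $f$ over integers $L\in[2,r]$. Since $r$ is fixed, I would minimize $g(L):=\frac{L}{2}+\frac{2r-3}{2L}+\frac{1}{L^2}$.

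\textbf{Continuous minimizer.} Treat $L$ as a real variable $x>0$. Then
\[
g'(x)=\frac12-\frac{2r-3}{2x^2}-\frac{2}{x^3},
\]
which is equivalent to $x^3-(2r-3)x-4=0$ after multiplying by $2x^3$. Note that $x^3-(2r-3)x-4=(x+1)(x^2-x-(2r-4))-\,$(a remainder) does not factor cleanly, so I will not try to solve the cubic exactly. Instead I will show that $g$ is unimodal and locate its minimizer between consecutive integers near $\sqrt{2r-1}$.

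Unimodality follows from $g''(x)=\frac{2r-3}{x^3}+\frac{6}{x^4}>0$ for $r\ge 2$, so $g$ is strictly convex on $x>0$.

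\textbf{Comparing consecutive integers.} By convexity, the integer minimizer is an integer $m$ satisfying $g(m)\le g(m+1)$ and $g(m)\le g(m-1)$. Compute the difference directly:
\[
g(m+1)-g(m)=\frac12-\frac{2r-3}{2m(m+1)}-\frac{2m+1}{m^2(m+1)^2}.
\]
Multiplying by $2m^2(m+1)^2$, the sign of this difference equals the sign of
\[
h(m):=m^2(m+1)^2-(2r-3)m(m+1)-2(2m+1).
\]
The plan is to bound $h$ when $m(m+1)$ sits relative to $2r-1$. Writing $s=m(m+1)$, we have $h=s^2-(2r-3)s-2(2m+1)$.

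If $m\ge\lceil\sqrt{2r-1}\rceil$, then $m^2\ge 2r-1$, so $s\ge 2r-1+m$ and $s-(2r-3)\ge m+2$. This gives $h\ge s(m+2)-4m-2>0$, so $g$ is increasing from $\lceil\sqrt{2r-1}\rceil$ onward.

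If $m+1\le\lfloor\sqrt{2r-1}\rfloor$, then $(m+1)^2\le 2r-1$, so $s\le 2r-2-m$ and $s-(2r-3)\le 1-m\le 0$ for $m\ge1$. This gives $h<0$, so $g$ is decreasing up to $\lfloor\sqrt{2r-1}\rfloor$.

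Together these show the minimum over integers is attained at $\lfloor\sqrt{2r-1}\rfloor$ or $\lceil\sqrt{2r-1}\rceil$. If both lie outside $[2,r]$, clip to the boundary; for $r\ge2$ we have $\sqrt{2r-1}\le r$, so only the lower end could need care.

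\textbf{Main obstacle.} The delicate step is the sign of $h$ in the two regimes, in particular that the $-2(2m+1)$ term and the $1/L^2$ correction do not move the threshold away from $\sqrt{2r-1}$. This is exactly why the threshold is $\sqrt{2r-1}$ rather than the naive value $\sqrt{2r-3}$ coming from the first two terms. I will verify these inequalities carefully. I will also check small $r$ (e.g.\ $r=2,3$) by hand to confirm that the boundary constraint $L\ge 2$ is consistent with the statement.
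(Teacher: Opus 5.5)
Your proposal is correct, and it takes a different and more complete route than the paper's proof.

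\textbf{The paper's route.} The paper passes to the limit and sets $\partial\gamma_{sys}/\partial L=0$, obtaining the cubic $L^3+(3-2r)L-4=0$. It then simply asserts that, since $L$ is an integer, the ``approximate minimum'' is at $\lfloor\sqrt{2r-1}\rfloor$ or $\lceil\sqrt{2r-1}\rceil$. It never locates the root or justifies the rounding.

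\textbf{Your route.} You work directly on the integers. The sign of $g(m+1)-g(m)$ is that of $h(m)=s^2-(2r-3)s-2(2m+1)$ with $s=m(m+1)$. Your two regimes show that $h(m)<0$ whenever $m+1\le\lfloor\sqrt{2r-1}\rfloor$, and $h(m)>0$ whenever $m\ge\lceil\sqrt{2r-1}\rceil$. These ranges cover every positive integer, so this alone proves the lemma; the convexity step is not needed. One small detail: $m(m+1)(m+2)-4m-2$ vanishes at $m=1$, so strictness in the increasing regime uses $m\ge\lceil\sqrt{2r-1}\rceil\ge 2$, which holds for $r\ge 2$.

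\textbf{What each buys.} The paper's argument is short and exhibits the stationary equation. Yours actually establishes the claimed location, which the calculus alone does not. For $r\ge 3$ the positive root $x^*$ of the cubic lies strictly inside $(\sqrt{2r-3},\sqrt{2r-1})$. For example, if $\sqrt{2r-1}=N\ge 3$ is an integer, then $x^*\in(N-1,N)$. Rounding the critical point would then leave the candidates $\{N-1,N\}$, and only a discrete comparison like yours shows the minimizer is exactly $N$.

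Your handling of the constraint $2\le L\le r$ is also fine. Only $r=2$ needs care, and there $L=2=\lceil\sqrt{3}\rceil$ is forced.
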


\begin{proof}
	If $L$ is a factor of $k$, then by Lemma~\ref{lem:2}, $\gamma_{sys} = \frac{L}{2r} + \frac{1}{L} - \frac{3}{2Lr} + \frac{1}{L^2r}$. Taking the partial derivative of $\gamma_{sys}$ with respect to $L$ gives $$\frac{\partial \gamma_{sys}}{\partial L} = \frac{1}{2r} - \frac{1}{L^2} + \frac{3}{2L^2r} - \frac{2}{L^3r} = 0,$$ i.e., $L^3 +  (3-2r)L-4 = 0$. Since $L$ is a positive integer, %according to the relevant conclusions in \cite{2021piggybacking}, 
 $\gamma_{sys}$ attains the approximate minimum when $L = \lfloor\sqrt{2r - 1}\rfloor$ or $L = \lceil\sqrt{2r-1}\rceil $.
\end{proof}

\subsection{Average Repair Bandwidth Ratio of Parity Nodes}

\begin{lemma}
\label{lem:4}
	If $L$ is a factor of $k$, then the average repair bandwidth ratio of parity nodes $\gamma_{par}$ is $$\gamma_{par} = \frac{k+r-1}{kr} + \frac{(L-1)(2r-L)}{2Lr^2}.$$
\end{lemma}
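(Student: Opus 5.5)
Average the per-node parity repair costs from Section~\ref{sec:3.2} and divide by the $kr$ data symbols, just as in the proof of Lemma~\ref{lem:2}. Section~\ref{sec:3.2} gives the repair bandwidth of parity node $f$ as $k+r-1$ when $f\in[k+1,k+r-L+1]$, and as $k+r-1+n_{k+r+1-f}(f-k-1)$ when $f\in[k+r-L+2,k+r]$. The extra term is the number of data symbols in the region $\mathbf{M}_{k+r+1-f}$. The statement writes it with $\Phi_{k+r+1-f}$, but it means the group size $n_{k+r+1-f}$.

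Summing over all $r$ parity nodes gives a total of
\[
r(k+r-1)+\sum_{f=k+r-L+2}^{k+r} n_{k+r+1-f}(f-k-1).
\]
I will reindex the second sum by $t=k+r+1-f$. As $f$ runs over $[k+r-L+2,k+r]$, $t$ runs over $[L-1]$, and $f-k-1=r-t$. The extra cost therefore becomes $\sum_{t=1}^{L-1} n_t(r-t)$. This is simply the total size of the regions $\mathbf{M}_1,\dots,\mathbf{M}_{L-1}$, which matches the structure of the piggybacking step.

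Next I use the hypothesis that $L$ divides $k$, so $n_t=k/L$ for every $t$. Then
\[
\sum_{t=1}^{L-1}\frac{k}{L}(r-t)=\frac{k}{L}\Big((L-1)r-\frac{(L-1)L}{2}\Big)=\frac{k(L-1)(2r-L)}{2L}.
\]
Dividing the total by $r$ gives the average repair bandwidth, and dividing again by $kr$ gives
\[
\gamma_{par}=\frac{k+r-1}{kr}+\frac{(L-1)(2r-L)}{2Lr^2},
\]
which is the claimed formula.

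The only real obstacle is bookkeeping. One point is the index shift: the last $L-1$ parity nodes correspond to the groups $\Phi_{L-1},\dots,\Phi_1$ in reverse order, and node $f$ needs $f-k-1$ columns of that group. The other is remembering that the optimal nodes, the first $r-L+1$ parity nodes, contribute only the baseline $k+r-1$.
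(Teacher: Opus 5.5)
Your proposal is correct and follows the paper's proof: both sum the per-node costs from Section~\ref{sec:3.2} over the $r$ parity nodes, rewrite the extra cost as $\sum_{i=1}^{L-1} n_i(r-i)$ (the total size of $\mathbf{M}_1,\dots,\mathbf{M}_{L-1}$), substitute $n_i=k/L$, and divide by $kr^2$. Your reindexing $t=k+r+1-f$ and your reading of $\Phi_{k+r+1-f}$ as the group size $n_{k+r+1-f}$ are both right, and they make the bookkeeping cleaner than the paper's displayed sum, whose summation index ($t$) does not match the variable in its summand ($f$).
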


%\begin{proof}
%	According to the repair method in Section \ref{piggybackparity}, we need to
%	download $\sum_{i=1}^{r}k$ symbols in repairing each of the $r$ parity nodes in the first step.
%	In the second step, we need to download $\sum_{i=1}^{r}(r-1) + \sum_{i = 1}^{L-1} n_i(r-i)$
%	in repairing each of the $r$ parity nodes. Therefore, we can calculate that
%	\begin{equation}\nonumber
%		\gamma_{parity} = \frac{\sum_{i = 1}^{r}(k+r-1) + \sum_{i=1}^{L-1}n_i(r-i)}{kr^2}.
%	\end{equation}
%	Because $L$ is a factor of $k$, we have $n_i = \frac{k}{L}$ for $i = 1,2,\ldots,L-1$,
%	and further obtain that
%	$$\gamma_{parity} = \frac{k+r-1}{kr} + \frac{(L-1)(2r-L)}{2Lr^2}.$$
%\end{proof}

\begin{proof}
	According to the repair method in Section~\ref{sec:3.2}, when $f\in[k+r-L+2,k+r]$, the repair bandwidth of node $f$ is $k + r - 1 + n_{k+r+1-f}(f-k-1)$ symbols; when $f\in[k+1,k+r-L+1]$, the repair bandwidth of node $f$ is $k + r - 1$ symbols. 
    Therefore,
	\begin{eqnarray}\nonumber
		\gamma_{par} &=& \frac{\sum_{t = k+r-L+2}^{k+r}\Big[k + r - 1 + n_{k+r+1-f}(f-k-1)\Big] }{kr^2} \\ \nonumber
            &+& \frac{(r-L+1)(k+r-1)}{kr^2} \\ \nonumber
		&=& \frac{r(k+r-1) + \sum_{i=1}^{L-1}n_i(r-i)}{kr^2}. \nonumber
	\end{eqnarray}
	Since $L$ is a factor of $k$, we have $n_i = \frac{k}{L}$ for $i = 1,2,\cdots,L-1$,
	which gives
	$$\gamma_{par} = \frac{k+r-1}{kr} + \frac{(L-1)(2r-L)}{2Lr^2}.$$
\end{proof}

\section{Evaluation and Comparison}
\label{sec:5}
In this section, we evaluate the proposed conjugate-piggybacking codes $\mathcal{C}(n,k,L)$ and compare them with existing piggybacking-based constructions that are designed to reduce repair bandwidth. Our focus is on repair efficiency under small sub-packetization, since this regime is particularly relevant to high-rate distributed storage systems where both repair traffic and implementation overhead must be carefully controlled.

\subsection{Analytical Comparison}

Table~\ref{tab:1} summarizes the sub-packetization level and the theoretical lower bounds on repair bandwidth for representative piggybacking-based codes and for the proposed construction. In particular, we use $\mathcal{C}_0$ to denote the code in \cite{2021piggybacking}, and $\mathcal{C}_1$ to denote the code in \cite{2022Piggyback}. The table highlights the design trade-offs among sub-packetization, data-node repair bandwidth, and parity-node repair bandwidth.

\begin{table*}
	\centering
	\caption{Comparison of representative piggybacking-based codes and the proposed code.}
	\begin{tabular}{|c|c|c|c|}
		\hline
		 Codes&Sub-packetization&\makecell[c]{Average Repair Bandwidth Ratio of \\ Data Nodes \\(or all nodes in the first code in \cite{2023piggyback})} &\makecell{Average Repair Bandwidth Ratio of \\ Parity Nodes}\\
		\hline
		RSR-I \cite{2017Piggybacking}&$2h$ & $\geq\frac{r+1}{2r}$ & $\geq\frac{1}{r}+\frac{(r-1)(k+r-1)}{2kr}$ \\
		\hline
		RSR-II \cite{2017Piggybacking}&$(2r-3)h$ & $\geq\frac{r+1}{2r-3}$ & $\geq\frac{1}{r}+\frac{(r-1)(k+r-1)}{2kr}$ \\ 
		\hline
		REPB \cite{2018Repair}&$s+p$ & $\geq \frac{2}{\sqrt{r}+1}$ & 1\\
	    \hline
		OOP \cite{2019AnEfficient}&$r-1+\sqrt{r-1}$ & $\geq\frac{2\sqrt{r-1}+1}{2\sqrt{r-1}+r}$ & $\geq\frac{\sqrt{r-1}+1}{r} + \frac{(r-1)^2-\sqrt{(r-1)^3}}{kr}$ \\[3pt] 
		\hline
		Code $\mathcal{C}_0$ \cite{2021piggybacking}&$r$ &\makecell{$\geq \frac{L^*}{2r}+\frac{1}{L^*}-\frac{1}{2L^*r}$ \\ $(L^{*}=\sqrt{2r-1})$} & \makecell{$ \geq \frac{k+r}{kr}+\frac{2(r-L^*-1)^2}{k(4r-3-L^*)}$ \\  $+\frac{(kr-k-r)(L^*+1)}{kr^2}$\\$(L^{*}=\sqrt{2r-1})$}\\ 
		\hline
		Code $\mathcal{C}_1$ \cite{2022Piggyback}&$r$ & \makecell{$\geq\frac{L^{**}}{2r} + \frac{1}{L^{**}} - \frac{3}{2L^{**}r} + \frac{1}{(L^{**})^2r}$ \\ $(L^{**}=\sqrt{2r-1})$} &\makecell{$\geq\frac{(r-L^{**})(r-1)}{kr^2}+\frac{L^{**}}{r }+\frac{r-L^{**}}{r^2}$ \\ $+\frac{2(r-L^{**})^2(r-1)^2}{[(L^{**}-2)(L^{**}-1)+2r-4]kr^2}$\\$(L^{**}=\sqrt{2r-1})$} \\
		\hline
        The first code in \cite{2023piggyback}& $m<r$ &\makecell{$=\frac{(k+r)(m^2-m(L+1)+\frac{(L+1)(2L+1)}{6})}{Lmk(r-1)}$\\$+\frac{L+1}{2m}+\frac{m-L}{mk}$} &
        \\
		\hline
		Proposed code $\mathcal{C}(n,k,L)$ & $r$ &\makecell{ $\geq\frac{L}{2r} + \frac{1}{L} - \frac{3}{2Lr} + \frac{1}{L^2r}$\\$ + \frac{- L^3 +(2+r)L^2 -2L -r+1}{kLr}$ \\ ($L=\sqrt{2r-1}$)}& \makecell{$\geq\frac{k+r-1}{kr} + \frac{(L-1)(2r-L)}{2Lr^2}$\\ $(L=\sqrt{2r-1})$ } \\ 
		\hline
	\end{tabular}
    \label{tab:1}
\end{table*}

Existing results already establish a partial ordering among several baselines. In particular, \cite{2019AnEfficient} proved that when $r>5$, the OOP code has strictly lower repair bandwidth than REPB, RSR-I, and RSR-II. The numerical results in \cite{2021piggybacking} further show that when $r>10$ and $k$ is large, the repair bandwidth of $\mathcal{C}_0$ is strictly lower than that of OOP. It was then shown in \cite{2022Piggyback} that the lower bound on the average repair bandwidth ratio of data nodes of $\mathcal{C}_1$ is strictly lower than that of $\mathcal{C}_0$ for $r>0$. In addition, the numerical results in \cite{2023piggyback} indicate that the first code in \cite{2023piggyback} achieves the lowest average repair bandwidth ratio of all nodes among piggybacking codes with sub-packetization smaller than $r$ under high-code-rate settings.

Motivated by these results, we focus on the high-rate regime in which $r\ll k$ and the sub-packetization level remains on the order of $O(r)$. This is the regime in which piggybacking-based designs are most attractive for storage deployments, because it balances repair efficiency against implementation complexity. In the following theoretical comparison, we therefore compare the proposed construction mainly against OOP \cite{2019AnEfficient}, code $\mathcal{C}_1$ in \cite{2022Piggyback}, and the first code in \cite{2023piggyback}, which represent strong baselines under small-sub-packetization conditions. Moreover, when $\sqrt{2r-1}$ is an integer, Lemma~\ref{lem:4} shows that the proposed code achieves its minimum repair bandwidth at $L=\sqrt{2r-1}$.

\begin{theorem}{(Comparison with OOP code)}
	\label{th:5}
	When $5<r\ll k$, the average repair bandwidth ratios of both data nodes and parity nodes of the proposed code are strictly lower than those of the OOP code.
\end{theorem}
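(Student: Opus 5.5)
The plan is to compare the asymptotic expressions in Table~\ref{tab:1} directly. The regime $5<r\ll k$ lets us drop every term of order $O(1/k)$ on both sides, provided the strict gap we obtain between the leading terms does not vanish. For the proposed code we use Lemma~\ref{lem:2} and Lemma~\ref{lem:4}, evaluated at the optimizing $L$ of Lemma~\ref{lem:3}. For OOP we use the lower bounds in Table~\ref{tab:1}. Write $s:=\sqrt{r-1}$, so that $s>2$ when $r>5$.

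\textbf{Parity nodes (the easy part).} By Lemma~\ref{lem:4}, for any admissible $L$,
\[
\gamma_{par}=\frac{1}{r}+\frac{r-1}{kr}+\frac{(L-1)(2r-L)}{2Lr^2}.
\]
Since $(L-1)/L<1$ and $2r-L<2r$, the last term is strictly less than $1/r$. Hence $\gamma_{par}<\frac{2}{r}+\frac{r-1}{kr}$. The OOP bound is at least $\frac{s+1}{r}+\frac{(r-1)^2-(r-1)^{3/2}}{kr}$, and its $1/k$ term is nonnegative because $r-1\ge (r-1)^{1/2}$. Since $s+1>3$, the OOP value is at least $\frac{3}{r}$. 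The $O(1/k)$ term $\frac{r-1}{kr}$ on our side is below $\frac{1}{r}$ once $k>r-1$, which holds in the regime $r\ll k$. So the parity claim follows for every choice of $L$, and in particular for the $L$ chosen for data nodes.

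\textbf{Data nodes (the main obstacle).} First take $L=\sqrt{2r-1}$, as in Table~\ref{tab:1}, and drop the $1/k$ term. Using $L^2=2r-1$, the expression of Lemma~\ref{lem:2} simplifies to
\[
\frac{L^2+2r-3}{2rL}+\frac{1}{L^2r}=\frac{2(r-1)}{r\sqrt{2r-1}}+\frac{1}{r(2r-1)}.
\]
This is roughly $\sqrt{2/r}$. The OOP quantity is $\frac{2s+1}{2s+r}$, which is roughly $2/\sqrt r$, so a gap factor of about $\sqrt2$ is available. To make this rigorous we would:
\begin{enumerate}
\item bound our expression above by $\frac{\sqrt2\,(r-1)}{r\sqrt{r-1/2}}+\frac{1}{r(2r-1)}$;
\item bound the OOP term below by $\frac{2s}{s^2+2s+1}=\frac{2s}{(s+1)^2}$;
\item clear denominators, reducing the claim to a polynomial inequality in $s$;
\item check that inequality for all $s>2$ by elementary means, for instance by verifying the base case $s=2$ and showing the difference of the two sides is increasing in $s$.
\end{enumerate}

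The genuine subtlety is that $L$ must be an integer. We would handle this by taking $L\in\{\lfloor\sqrt{2r-1}\rfloor,\lceil\sqrt{2r-1}\rceil\}$, following Lemma~\ref{lem:3}. Set $g(L)=\frac{L}{2r}+\frac1L-\frac{3}{2Lr}+\frac{1}{L^2r}$. Its derivative is nonnegative for $L\ge\sqrt{2r-1}$. Moreover, $g(L\pm1)-g(\sqrt{2r-1})$ is $O(1/(r\sqrt r))$ by a second-order Taylor bound, since $g''(L)=O(L^{-3})$. This perturbation is negligible against the constant-factor gap (about $(2-\sqrt2)/\sqrt r$) between the two leading terms. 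For the smallest cases, say $6\le r\le 10$, we would simply verify the inequality numerically.

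Finally, every strict inequality obtained above has a gap of order at least $1/r$ (for parity) or $1/\sqrt r$ (for data). The discarded terms of order $1/k$ are therefore dominated once $k$ is sufficiently large relative to $r$, which gives the theorem.
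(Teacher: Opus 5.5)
Your parity-node argument is correct. Bounding $\frac{(L-1)(2r-L)}{2Lr^2}<\frac1r$ and using $\sqrt{r-1}>2$ to put the OOP value above $\frac3r$ is a slightly cleaner form of the paper's estimate $(L-1)(2r-L)<Lr\sqrt{r-1}$, and it also keeps track of the $1/k$ terms.

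The data-node part, however, breaks at your step 2. The target inequality has essentially no slack near the threshold. At $r=5$ (so $L=3$), both $\frac{2\sqrt{r-1}+1}{2\sqrt{r-1}+r}$ and $\frac{2(r-1)}{r\sqrt{2r-1}}+\frac{1}{r(2r-1)}$ equal $\frac59$ exactly, so the threshold $r>5$ is sharp. At $r=6$ the margin is only about $0.005$ ($0.5225$ versus $0.5177$). Replacing $\frac{2s+1}{(s+1)^2}$ by $\frac{2s}{(s+1)^2}$ discards $\frac{1}{(s+1)^2}$, which is far larger than this margin. As a result, the reduced inequality $\frac{2s}{(s+1)^2}>\frac{2(r-1)}{r\sqrt{2r-1}}+\frac{1}{r(2r-1)}$ is false for every integer $6\le r\le 20$: it reads $0.427>0.518$ at $r=6$ and $0.3036>0.3055$ at $r=20$. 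In particular, your proposed base case $s=2$ fails ($\frac49<\frac59$). So steps 3--4 cannot be carried out, and your numerical fallback for $6\le r\le 10$ does not cover $11\le r\le 20$. Note also that $\gamma_1-\gamma_2$ is not monotone in $r$: it is $0$ at $r=5$ and tends to $0$ as $r\to\infty$. Any monotonicity argument therefore has to be applied to a cleared numerator, not to the difference of the two ratios.

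The fix is to keep the exact OOP value.
\begin{itemize}
\item The paper puts $\gamma_1-\gamma_2$ over the positive denominator $r(2\sqrt{r-1}+r)(2r-1)$.
\item For $r\ge 6$ it bounds $g_2(r)=2\sqrt{r-1}\,(r\sqrt{r-1}-1)\ge 6\sqrt{r-1}\sqrt{2r-1}$.
\item This reduces positivity of the numerator to $5+r\,h(r)>0$, with $h(r)=\sqrt{2r-1}-\sqrt{r-1}-2$.
\item Since $h$ is increasing and $h(23)>0$, this settles $r\ge 23$; the cases $6\le r\le 22$ are checked directly.
\end{itemize}
The paper works throughout with $L=\sqrt{2r-1}$, as in the comparison table, and does not address integrality. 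Your integer-$L$ refinement is extra, and it does hold: at $r=6$, $L=3$ gives $0.5185<0.5225$. But the true margin for moderate $r$ is much smaller than your heuristic $(2-\sqrt2)/\sqrt r$; at $r=11$ it is about $0.022$ rather than $0.18$. So your $O(r^{-3/2})$ rounding estimate must be made explicit, or replaced by direct computation over a larger initial range of $r$.
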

\begin{proof}
We first consider the average repair bandwidth ratio of data nodes. %of code $\mathcal{C}$ is strictly lower than that of OOP code.

From Table~\ref{tab:1}, the lower bound on the average repair bandwidth ratio of data nodes of the OOP code is $\gamma_{1}=\frac{2\sqrt{r-1}+1}{2\sqrt{r-1}+r}$. When $r\ll k$, the lower bound on the average repair bandwidth ratio of data nodes of the proposed code is $\gamma_{2}=\frac{\sqrt{2r-1}}{2r}+\frac{1}{\sqrt{2r-1}}-\frac{3}{2r\sqrt{2r-1}}+\frac{1}{r(2r-1)}$. Let $f(r):=\gamma_1-\gamma_2$. Then,
\begin{figure*}[ht]
\centering
\begin{eqnarray}
    &&f(r):=\gamma_1-\gamma_2\nonumber\\
    &&=\frac{2\sqrt{2r-1}+1}{2\sqrt{2r-1}+r}-(\frac{\sqrt{2r-1}}{2r}+\frac{1}{\sqrt{2r-1}}-\frac{3}{2r\sqrt{2r-1}}+\frac{1}{r(2r-1)})\nonumber\\
    &&=\frac{(2\sqrt{r-1}\sqrt{2r-1}(r(\sqrt{2r-1}-\sqrt{r-1})-2r+2))+(2r(r-1)-2\sqrt{r-1})}{r(2\sqrt{r-1}+r)(2r-1)}.
\end{eqnarray}
\end{figure*}
Let $g_1(r)=2\sqrt{r-1}\sqrt{2r-1}(r(\sqrt{2r-1}-\sqrt{r-1})-2r+2)$ and $g_2(r)=2r(r-1)-2\sqrt{r-1}$. We have $f(r)=\frac{g_1(r)+g_2(r)}{r(2\sqrt{r-1}+r)(2r-1)}$, and we only need to show that $g_1(r)+g_2(r)>0$.

Note that when $r\geq6$, we have $r\sqrt{r-1}-1\geq3\sqrt{2r-1}$. Thus $g_2(r)=2\sqrt{r-1}(r\sqrt{r-1}-1)\geq2\sqrt{r-1}\cdot3\sqrt{2r-1}$. Then, we have
\begin{eqnarray}\nonumber
    &&g_1(r)+g_2(r)\geq g_1(r)+2\sqrt{r-1}\cdot3\sqrt{2r-1}\\ \nonumber
    &&=2\sqrt{r-1}\sqrt{2r-1}(5+r(\sqrt{2r-1}-\sqrt{r-1}-2)).\nonumber
\end{eqnarray}
Let $h(r):=\sqrt{2r-1}-\sqrt{r-1}-2$. Since $h(23)\approx0.02>0$ and $h(r)$ is a monotone increasing function of $r$ when $r\geq23$, we have $h(r)\geq h(23)>0$, which means that $g_1(r)+g_2(r)\geq2\sqrt{r-1}\sqrt{2r-1}(5+r\cdot h(r))>0$ for $r\geq23$. On the other hand, for $r\in\{6,7,\cdots,22\}$, we can calculate $f(r)>0$. %Combined with the above analysis, we can know that for any $5<r\ll k$, $f(r)>0$, 
Therefore, the average repair bandwidth ratio of data nodes of the proposed code is strictly lower than that of the OOP code.

%Next, we show that the average repair bandwidth ratio of parity nodes of code $\mathcal{C}$ is strictly lower than that of OOP code. 

From Table~\ref{tab:1}, when $r\ll k$, the lower bound on the average repair bandwidth ratio of parity nodes of the OOP code is $\gamma_{3}=\frac{\sqrt{r-1}+1}{r}$, and the lower bound on the average repair bandwidth ratio of parity nodes of the proposed code is $\gamma_4=\frac{1}{r}+\frac{(L-1)(2r-L)}{2Lr^2}$, where $L=\sqrt{2r-1}$. Let 
\begin{eqnarray}
    &&\overline{f}(r)=\gamma_3-\gamma_4=\frac{\sqrt{r-1}+1}{r}-(\frac{1}{r}+\frac{(L-1)(2r-L)}{2Lr^2})\nonumber\\
    &&=\frac{Lr\sqrt{r-1}-(L-1)(2r-L)}{Lr^2}+\frac{(L-1)(2r-L)}{2Lr^2}.\nonumber
\end{eqnarray}
Note that when $r>5$, $r(2-\sqrt{r-1})<0<L$, i.e., $2r-L<r\sqrt{r-1}$, and hence $$(L-1)(2r-L)<L(2r-L)<Lr\sqrt{r-1}.$$
Therefore, when $r>5$, $\overline{f}(r)>\frac{(L-1)(2r-L)}{2Lr^2}>0$. Thus, the average repair bandwidth ratio of parity nodes of the proposed code is strictly lower than that of the OOP code.
\end{proof}

\begin{theorem}{(Comparison with code $\mathcal{C}_1$)}
	\label{th:6}
	When $1<r\ll k$, the average repair bandwidth ratio of data nodes of the proposed code is the same as that of code $\mathcal{C}_1$, and the average repair bandwidth ratio of parity nodes of the proposed code is strictly lower than that of code $\mathcal{C}_1$.
\end{theorem}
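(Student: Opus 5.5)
The comparison uses the entries of Table~\ref{tab:1}, evaluated in the regime $r\ll k$ (so terms of order $1/k$ are dropped) and with $L=L^{**}=\sqrt{2r-1}$ for both codes.

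\textbf{Data nodes.} By Lemma~\ref{lem:2}, the ratio $\gamma_{sys}$ of the proposed code has leading part $\frac{L}{2r}+\frac{1}{L}-\frac{3}{2Lr}+\frac{1}{L^2r}$ plus an $O(1/k)$ correction. This leading part is exactly the bound listed for $\mathcal{C}_1$ with $L^{**}$ replaced by $L$. Letting $k\to\infty$ with $r$ fixed, the $O(1/k)$ terms vanish. Lemma~\ref{lem:3} says both expressions are minimized at the same $L=\sqrt{2r-1}$, so the two limiting data-node ratios coincide. This gives the first claim.

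\textbf{Parity nodes.} Drop the $O(1/k)$ terms again. By Lemma~\ref{lem:4}, the proposed code has
\[
\gamma_4=\frac{1}{r}+\frac{(L-1)(2r-L)}{2Lr^2}.
\]
The $\mathcal{C}_1$ bound reduces to
\[
\gamma_5=\frac{L}{r}+\frac{r-L}{r^2}.
\]
Multiplying the difference by $2Lr^2$ gives
\[
2Lr^2(\gamma_5-\gamma_4)=2L^2r+2L(r-L)-2Lr-(L-1)(2r-L)=2L^2r-2L^2-(L-1)(2r-L).
\]
It remains to show this is positive for $L\ge 2$ (i.e.\ $r\ge 2$, which is what $1<r$ means).

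Since $L\le r$, we have $2r-L\le 2r$, so $(L-1)(2r-L)\le 2r(L-1)$. Hence
\[
2L^2r-2L^2-(L-1)(2r-L)\ \ge\ 2L^2(r-1)-2r(L-1)\ =\ 2\bigl(L^2r-L^2-Lr+r\bigr)\ =\ 2(L-1)(Lr-L-r)+2(r-L).
\]
Call the right-hand side $(\ast)$. For $L\ge 2$ and $r\ge 2$ we have $Lr-L-r=(L-1)(r-1)-1\ge 0$, and $r-L\ge 0$, so $(\ast)\ge 0$.

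The bound must also be strict. If $r>2$, then $(L-1)(r-1)-1>0$, so the first term of $(\ast)$ is positive. If $r=2$, then $L=\sqrt3$ and a direct check shows $2L^2r-2L^2-(L-1)(2r-L)=6-(\sqrt3-1)(4-\sqrt3)>0$.

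\textbf{Main obstacle.} The algebra is routine. The delicate point is the meaning of ``$r\ll k$'': the $O(1/k)$ remainders differ between the two codes. The $\mathcal{C}_1$ parity bound carries an extra positive $1/k$ term, which only helps the strict inequality. The data-node equality, however, holds only asymptotically, so I would state it explicitly as a limit as $k\to\infty$.
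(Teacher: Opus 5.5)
Your approach is the same as the paper's. You read off the Table~\ref{tab:1} expressions in the $r\ll k$ limit, note that the leading data-node terms coincide, and show the parity-node gap is positive by elementary algebra. Your reduction to proving $2L^2r-2L^2-(L-1)(2r-L)>0$ is correct. The paper does essentially the same thing: it bounds the gap below by $\frac{r-L}{r^2}+\frac{(L-1)(2r-L)}{2Lr^2}$ and then uses $L<r$. Its displayed ``$=$'' at that point is really a ``$\ge$''.

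There is, however, a false step in your positivity argument. The identity
\[
2\bigl(L^2r-L^2-Lr+r\bigr)=2(L-1)(Lr-L-r)+2(r-L)
\]
does not hold. The right-hand side expands to $2L^2r-2L^2-4Lr+4r$, which is smaller than the left-hand side by $2r(L-1)$; for $r=5$, $L=3$ you get $52$ versus $32$. The correct decomposition is $L^2r-L^2-Lr+r=L(L-1)(r-1)+(r-L)$.

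Because the discrepancy $2r(L-1)$ is nonnegative, your chain survives if you replace that ``$=$'' by ``$\ge$''. With the correct identity, though, positivity is immediate for every $r\ge 2$, since $L=\sqrt{2r-1}>1$ and $L\le r$. Your case split then becomes unnecessary.

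Two smaller points. First, your parenthetical ``$L\ge 2$ (i.e.\ $r\ge 2$)'' is false, because $r=2$ gives $L=\sqrt3<2$; your separate check at $r=2$ happens to cover this. Second, your closing claim that the $1/k$ remainders ``only help'' is not accurate. By Lemma~\ref{lem:4}, the proposed code's parity ratio also contains the positive remainder $\frac{r-1}{kr}$, and this already exceeds the first $1/k$ term $\frac{(r-L)(r-1)}{kr^2}$ of the $\mathcal{C}_1$ bound. What actually justifies the strict inequality is that the leading gap is a positive constant independent of $k$, so it holds for all sufficiently large $k$.
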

\begin{proof}
From Table~\ref{tab:1}, when $r\ll k$, the lower bounds on the average repair bandwidth ratios of data nodes of the proposed code and code $\mathcal{C}_1$ are the same. The lower bound on the average repair bandwidth ratio of parity nodes of the proposed code is $\gamma_5=\frac{1}{r}+\frac{(L-1)(2r-L)}{2Lr^2}$, where $L=\sqrt{2r-1}$. The lower bound on the average repair bandwidth ratio of parity nodes of code $\mathcal{C}_1$ is $\gamma_6=\frac{L^{**}}{r}+\frac{r-L^{**}}{r^2}$, where $L^{**}=L=\sqrt{2r-1}$. Let 
\begin{eqnarray}
    &&\overline{g}(r)=\gamma_6-\gamma_5\nonumber\\
    &&=(\frac{L^{**}}{r}+\frac{r-L^{**}}{r^2})-(\frac{1}{r}+\frac{(L-1)(2r-L)}{2Lr^2})\nonumber\\
    &&=\frac{r-L}{r^2}+\frac{(L-1)(2r-L)}{2Lr^2}.\nonumber
\end{eqnarray}
When $1<r$, we have $L=\sqrt{2r-1}<r$ and $\overline{g}(r)>\frac{(L-1)(2r-L)}{2Lr^2}>0$. Therefore, the lower bound on the average repair bandwidth ratio of parity nodes of the proposed code is strictly lower than that of code $\mathcal{C}_1$.
\end{proof}

\begin{theorem}{(Comparison with the first code in \cite{2023piggyback})}
	\label{th:7}
	When $r\ll k$ and $r\in\{5,6,7\}$, the average repair bandwidth ratio of all nodes of the proposed code is strictly lower than that of the first code in \cite{2023piggyback}.
\end{theorem}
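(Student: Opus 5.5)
The approach is to compute, for each $r\in\{5,6,7\}$, the limiting (as $k\to\infty$) average repair bandwidth ratio of all nodes for both codes, and compare the resulting numbers directly. Since only three values of $r$ are involved, the proof reduces to a finite numerical check once both asymptotic expressions are written down.

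For the proposed code $\mathcal{C}(n,k,L)$, the all-node ratio is the node-weighted average
$$\gamma_{all}=\frac{k\,\gamma_{sys}+r\,\gamma_{par}}{k+r}.$$
With $L$ dividing $k$, Lemma~\ref{lem:2} gives $\gamma_{sys}$, and Lemma~\ref{lem:4} gives $\gamma_{par}=\frac{k+r-1}{kr}+\frac{(L-1)(2r-L)}{2Lr^2}$, which is $O(1)$. Hence, as $k\to\infty$ with $r$ fixed, the term $\frac{r}{k+r}\gamma_{par}$ vanishes, and so do the $O(1/k)$ terms of $\gamma_{sys}$. This leaves
$$\gamma_{all}\to \frac{L}{2r}+\frac{1}{L}-\frac{3}{2Lr}+\frac{1}{L^2r}.$$
Because $L$ must be an integer with $2\le L\le r$, I would not use $L=\sqrt{2r-1}$ directly. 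For each $r$ I would instead take the better of the two integers named in Lemma~\ref{lem:3} (for $r=5$, $L=3=\sqrt{9}$; for $r=6,7$, $L\in\{3,4\}$), choosing $k$ to be a multiple of that $L$ so that the lemmas apply. Evaluating gives:
\begin{itemize}
\item $r=5$, $L=3$: $0.3+0.3333-0.1+0.0222=0.5556$;
\item $r=6$, $L=3$: $0.25+0.3333-0.0833+0.0185\approx0.5185$;
\item $r=7$, $L=4$: $\approx0.2857+0.25-0.0536+0.0089=0.4911$.
\end{itemize}

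For the first code in \cite{2023piggyback}, I would take the formula from Table~\ref{tab:1} and let $k\to\infty$. The term $\frac{(k+r)(\cdots)}{Lmk(r-1)}$ tends to $\frac{m^2-m(L+1)+(L+1)(2L+1)/6}{Lm(r-1)}$, and $\frac{m-L}{mk}\to 0$. The limit is therefore
$$\frac{m^2-m(L+1)+\frac{(L+1)(2L+1)}{6}}{Lm(r-1)}+\frac{L+1}{2m}.$$
Since their construction requires $m<r$ (and $L\le m$, or whatever admissible range \cite{2023piggyback} specifies), I would minimize this expression over all admissible integer pairs $(m,L)$ for each $r\in\{5,6,7\}$ by exhaustive enumeration; there are only a handful of pairs. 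I would then check that each minimum strictly exceeds the corresponding value above.

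The main obstacle is this competitor side. First, the admissible parameter range $(m,L)$ of \cite{2023piggyback} must be pinned down precisely, since the comparison has to hold against their best choice. Second, the margin may be small: for example, $m=r-1$ with a mid-range $L$ gives values around $0.5$ for $r=7$. I would tabulate all pairs explicitly, to four decimals, to certify strict inequality. If the margins turn out to be razor-thin, I would instead carry out the comparison in exact rational arithmetic.
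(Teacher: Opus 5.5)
\textbf{The plan breaks at its final numerical step for $r=7$.} With integer $L$, the best value of your limiting expression at $r=7$ is at $L=4$:
\[
\frac{2}{7}+\frac14-\frac{3}{56}+\frac{1}{112}=\frac{55}{112}\approx 0.4911 .
\]
$L=3$ gives $\frac{31}{63}\approx 0.4921$, and every other $L$ is worse. The competitor with $(m,L)=(6,3)$ is admissible, since $1\le 3<6<7$, and gives
\[
\frac{4}{12}+\frac{36-24+\frac{14}{3}}{3\cdot 6\cdot 6}=\frac13+\frac{25}{162}=\frac{79}{162}\approx 0.4877 .
\]
Since $55\cdot 162=8910>8848=79\cdot 112$, the competitor is strictly better. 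Your exhaustive table would therefore not certify the inequality at $r=7$; it would produce a counterexample to it. Exact rational arithmetic does not help, because the sign is reversed, not merely thin. For $r=5$ and $r=6$ your plan does go through: $\frac59$ beats the competitor minimum $\frac{37}{64}$ at $(m,L)=(4,2)$, and $\frac{14}{27}$ beats $\frac{119}{225}$ at $(m,L)=(5,3)$.

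The theorem holds only in the sense the paper uses: a comparison of the Table~\ref{tab:1} lower bounds, with $L=\sqrt{2r-1}$ substituted as a real number. The paper's proof takes $\gamma_7=\frac{2(r-1)}{r\sqrt{2r-1}}+\frac{1}{r(2r-1)}$, which is about $0.4864$ at $r=7$. It then minimizes $\gamma_8(m,L)$ over the continuous region $1\le L<m<r$:
\begin{itemize}
\item the case $L=1$ is handled by AM--GM together with the REPB/OOP comparison;
\item there is no interior critical point, because the two stationarity equations force $6m'=3rL'+2>6r$;
\item the boundary edges $m=L+1$, $L=2$ and $m=r-1$ are checked separately.
\end{itemize}
The last edge gives about $0.48765$ at $r=7$, so the margin is only about $0.0012$, which is why the theorem stops at $r=7$.

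Your choice not to plug in $L=\sqrt{2r-1}$ is exactly what destroys the $r=7$ case. With achievable integer $L$, the claimed inequality is false at $r=7$, so no argument along your lines can prove it. If you adopt the paper's lower-bound convention for the proposed code, your finite enumeration over integer $(m,L)$ on the competitor side is a valid and simpler substitute for the paper's continuous boundary analysis: the integer minimum $\frac{79}{162}$ exceeds $0.4864$. As written, however, the proposal cannot establish the statement.
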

\begin{proof}
From Table~\ref{tab:1}, when $r\ll k$, the lower bounds on the average repair bandwidth ratios of all nodes of the proposed code and the first code in \cite{2023piggyback} are $\gamma_7=\frac{\sqrt{2r-1}}{2r}+\frac{1}{\sqrt{2r-1}}-\frac{3}{2r\sqrt{2r-1}}+\frac{1}{r(2r-1)}=\frac{2(r-1)}{r\sqrt{2r-1}}+\frac{1}{r(2r-1)}$ and $\gamma_8(m,L)=\frac{L+1}{2m}+\frac{m^2-m(L+1)+\frac{(L+1)(2L+1)}{6}}{Lm(r-1)}$, respectively, where $1\leq L<m<r$ \cite{2023piggyback}. 
It is sufficient to show that, when $r\in\{5,6,7\}$, $$\underset{1\leq L<m<r}{\min}\gamma_8(m,L)>\gamma_7=\frac{2(r-1)}{r\sqrt{2r-1}}+\frac{1}{r(2r-1)}.$$

When $L=1$, we have
\begin{eqnarray} \nonumber
    \gamma_8(m,L=1)&=&\frac{1}{m}+\frac{(m-1)^2}{m(r-1)}\\ \nonumber
    &=&(\frac{r}{r-1}\cdot\frac{1}{m}+\frac{m}{r-1})-\frac{2}{r-1}\\ \nonumber
    &\geq&\frac{2\sqrt{r}}{r-1}-\frac{2}{r-1}=\frac{2}{\sqrt{r}+1},\nonumber
\end{eqnarray}
where the last inequality follows from $\frac{r}{r-1}\cdot\frac{1}{m}+\frac{m}{r-1}\geq2\sqrt{\frac{r}{r-1}\cdot\frac{1}{m}\cdot \frac{m}{r-1}}=\frac{2\sqrt{r}}{r-1}$.

Recall from Table~\ref{tab:1} that the lower bound on the average repair bandwidth ratio of data nodes of REPB is $\frac{2}{\sqrt{r}+1}$, which is strictly larger than that of OOP codes when $r>5$ \cite{2019AnEfficient}. We have shown in Theorem~\ref{th:5} that, when $r>5$, the lower bound on the average repair bandwidth ratio of data nodes of the proposed code is lower than that of the OOP code. Therefore, when $r>5$ and $L=1$, the proposed code has lower repair bandwidth than $\gamma_8(m,L=1)$. When $r=5$, direct calculation shows that $\frac{2}{\sqrt{r}+1}-\gamma_7\approx0.06>0$. Hence, $\underset{1=L<m<r}{\min}\gamma_8(m,L)>\gamma_7$ for $r\geq5$.

We next consider the case $2\leq L\leq m-1$. Let $Q$ denote the triangular region $Q:=\{(m,L)|2\leq L\leq m-1,m\leq r-1\}=\bigtriangleup ABC$, as shown in Fig.~\ref{fig:3}. %Combined with the above analysis, and note that $L,m$ are positive integers, 
It is equivalent to show that $$\underset{(m,L)\in Q}{\min}\gamma_8(m,L)>\gamma_7=\frac{2(r-1)}{r\sqrt{2r-1}}+\frac{1}{r(2r-1)}.$$

\begin{figure}[htpb]
	\centering
	\includegraphics[width=0.9\linewidth]{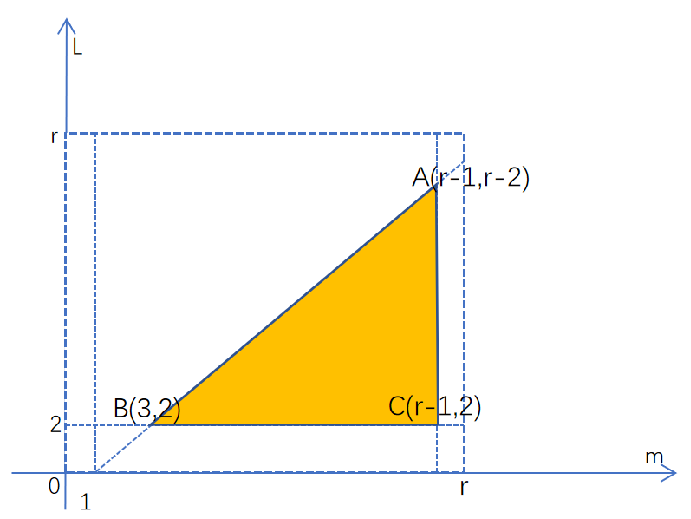}
	\caption{The triangle area $Q$, i.e., the orange part ($\bigtriangleup ABC$).}
	\label{fig:3}
\end{figure}

Since $Q$ is a bounded closed domain and $\gamma_8(m,L)$ is a continuous function of two variables on $Q$, the minimum value is attained. We next show by contradiction that the minimum of $\gamma_8(m,L)$ cannot be attained at an interior point of $Q$.

Suppose that $\gamma_8(m,L)$ attains the minimum at an interior point $(m',L')$ of $Q$. Then 
\begin{eqnarray}\nonumber
    &&\frac{\partial\gamma_8(m',L')}{\partial m}=0\Rightarrow \frac{1}{L'(r-1)}-\\ \nonumber
    &&(\frac{L'+1}{2}+\frac{1}{L'(r-1)}\cdot\frac{(L'+1)(2L'+1)}{6})\cdot\frac{1}{m'^2}=0 \nonumber\\
    &&\Rightarrow 6m'^2=(3r-1)L'^2+3rL' +1,\label{eq:10} \nonumber
\end{eqnarray}
and 
\begin{eqnarray} \nonumber
    &&\frac{\partial\gamma_8(m',L')}{\partial L}=0\\ \nonumber\Rightarrow
    &&\frac{1}{2m'}+\frac{1}{3m'(r-1)}-\frac{(m'^2-m'+\frac{1}{6})}{m'(r-1)L'^2}=0\nonumber\\ \nonumber
    &&\Rightarrow (3r-1)L'^2=6m'^2-6m'+1.\label{eq:11} \nonumber
\end{eqnarray}
Then, we have $6m'=3rL'+2$. Since $L'\geq2$ and $m'<r$, we obtain $3rL'+2\geq6r+2>6r>6m'$, which contradicts $6m'=3rL'+2$. Therefore, the minimum value of $\gamma_8(m,L)$ cannot be attained at an interior point of $Q$, i.e.,
$$\underset{(m,L)\in Q}{\min}\gamma_8(m,L)=\underset{(m,L)\in AB\bigcup BC\bigcup AC}{\min}\gamma_8(m,L).$$

When $(m,L)\in AB$, i.e., $m=L+1$, we have 
\begin{eqnarray}
    &&\gamma_8(m=L+1,L)=\frac{1}{2}+\frac{(\frac{L+1}{2})^2+\frac{1}{12}(L^2-1)}{Lm(r-1)}\nonumber\\
    &&>\frac{1}{2}+\frac{(\frac{L+1}{2})^2}{Lm(r-1)}=\frac{1}{2}+\frac{L+1}{4L(r-1)}>\frac{1}{2}+\frac{1}{4(r-1)}.\nonumber
\end{eqnarray}
Note that $\frac{1}{2}>\frac{2(r-1)}{r\sqrt{2r-1}}$ and $\frac{1}{4(r-1)}>\frac{1}{(2r-1)r}$ for $r\geq 7$. Thus,
\begin{align*}
\gamma_8(m=L+1,L)>&\frac{1}{2}+\frac{1}{4(r-1)}\\>&\frac{2(r-1)}{r\sqrt{2r-1}}+\frac{1}{(2r-1)r}=\gamma_7
\end{align*}
for $r\geq 7$.
When $r=5,6$, direct calculation shows that $(\frac{1}{2}+\frac{1}{4(r-1)})-(\frac{2(r-1)}{r\sqrt{2r-1}}+\frac{1}{(2r-1)r})>0$, and hence $\gamma_8(m=L+1,L)>\gamma_7$ also holds for $r=5,6$. Therefore, $\underset{(m,L)\in AB}{\min}\gamma_8(m,L)>\gamma_7$ for $r\geq5$.

When $(m,L)\in BC$, i.e., $L=2$, we have
\begin{eqnarray}\nonumber
    \gamma_8(m,L=2)&=&\frac{6r-1}{4(r-1)}\cdot\frac{1}{m}+\frac{m}{2(r-1)}-\frac{3}{2(r-1)}\\ \nonumber
    &\geq&\frac{\sqrt{3r-\frac{1}{2}}}{r-1}-\frac{3}{2(r-1)}, \nonumber
\end{eqnarray}
where the last inequality follows from $\frac{6r-1}{4(r-1)}\cdot\frac{1}{m}+\frac{m}{2(r-1)}\geq2\sqrt{\frac{6r-1}{4(r-1)}\cdot\frac{1}{m}\cdot \frac{m}{2(r-1)}}=\frac{\sqrt{3r-\frac{1}{2}}}{r-1}$.
Direct calculation shows that $\frac{\sqrt{3r-\frac{1}{2}}}{r-1}-\frac{3}{2(r-1)}>\gamma_7$ for $r\in\{2,3,\cdots,20000\}$. Thus, $\underset{(m,L)\in BC}{\min}\gamma_8(m,L)>\gamma_7$ for $r\in[2,20000]$.

When $(m,L)\in AC$, i.e., $m=r-1$, according to \cite[Lemma 4]{2023piggyback}, $\gamma_8(m,L)$ attains its minimum when $L=\sqrt{\frac{6m^2-6m+1}{3r-1}}$. Then $\underset{(m,L)\in AC}{\min}\gamma_8(m,L)=\gamma_8(m=r-1,L=\sqrt{\frac{6m^2-6m+1}{3r-1}})$.
Direct calculation shows that $\gamma_8(m=r-1,L=\sqrt{\frac{6m^2-6m+1}{3r-1}})-\gamma_7>0$ when $r=5,6,7$. Therefore, $\underset{(m,L)\in AC}{\min}\gamma_8(m,L)>\gamma_7$ for $r=5,6,7$.

Therefore, when $r\in\{5,6,7\}$ and $r\ll k$, the average repair bandwidth ratio of all nodes of the proposed code is strictly lower than that of the first code in \cite{2023piggyback}.
\end{proof}

%Theorem \ref{th:5}, Theorem \ref{th:6} and Theorem \ref{th:7} are all proved under the assumption of $r\ll k$, but in fact, $k$ will not get $\infty$. 
% In Fig. \ref{fig:4}, we show the comparison of the average repair bandwidth ratio of all nodes of code $\mathcal{C}$ and OOP code, code $\mathcal{C}_2$ and code $\mathcal{C}_1(n,k,m,L)$ when $r\in\{4,5,\cdots,20\}$ and $k=3r$ (see Fig. \ref{fig:4.(a)}) or $k=36$ (see Fig. \ref{fig:4.(b)}).

\subsection{Finite-Field Feasibility and Repair-Bandwidth Evaluation}

Table~\ref{tab:codes_and_fields} compares the proposed code with several related constructions, including HTEC \cite{2018hetc}, BPD \cite{2023wk1}, ET-RS codes \cite{2023tran}, piggybacking codes \cite{2017Piggybacking}, and piggybacking+ codes \cite{2023gcplus}, from the perspective of two implementation-relevant parameters: field size and sub-packetization level. These two parameters are important in practice because they directly affect the feasibility and complexity of code deployment in distributed storage systems.

From Table~\ref{tab:codes_and_fields}, we observe that the proposed code requires a smaller field size than HTEC, BPD, and ET-RS codes, while incurring only a modest field-size increase relative to conventional piggybacking codes \cite{2017Piggybacking} and piggybacking+ codes \cite{2023gcplus}. At the same time, our construction keeps the sub-packetization level at $\ell=r$, whereas piggybacking+ codes require a larger sub-packetization level of the form $\ell=sr$ with $2\leq s\leq r$. Therefore, the proposed design offers a more favorable trade-off between field-size overhead and sub-packetization among existing related constructions.

\begin{table}[htpb] 
\centering
\caption{Comparison of field size and sub-packetization level for the proposed code and related constructions.}
  \resizebox{\linewidth}{!}{
		\begin{tabular}{|c|c|c|}
			\hline
    Codes & Field Size $q$ & Sub-packetization \\
    \hline
    HTEC \cite{2018hetc} & $\geq\binom{n}{k}(n-k)\alpha$ & $2 \leq \ell\leq r^{\lceil\frac{k}{r}\rceil}$\\
    \hline
    BPD \cite{2023wk1} & $\geq\binom{n-1}{k-1}+2$& $2\leq \ell\leq r$\\
    \hline
    ET-RS \cite{2023tran} & $\geq2\Big(\binom{n - 1}{k - 1} - \binom{\lceil \frac{n}{\alpha*} \rceil - 1}{\lceil \frac{k}{\alpha*} \rceil - 1}\Big)$& $2\leq \ell\leq r^{\lfloor\frac{n}{r}\rfloor}$\\
    \hline
    Piggybacking codes \cite{2017Piggybacking} & $\geq n$ & Flexible\\
    \hline
    Piggybacking+ codes \cite{2023gcplus} & $\geq n$ & $\ell=sr$, where $2\leq s\leq r$\\
    \hline
    \textbf{Proposed codes} & $\geq kr^2$ & $\ell=r$\\
            \hline
		\end{tabular}
    \label{tab:codes_and_fields}}
\end{table}

Fig.~\ref{fig:4} further compares the average repair bandwidth ratio of all nodes for the proposed code and several efficient piggybacking-based schemes, including RSR-I \cite{2017Piggybacking}, OOP \cite{2019AnEfficient}, code $\mathcal{C}_1$ \cite{2022Piggyback}, and the code in \cite{2023gcpig}, denoted here as $\mathcal{C}_2$. The evaluation is conducted under the high-rate setting $r=\ell=4$, $4\leq k\leq 52$, and field $\mathbb{F}_{2^8}$. For the proposed construction, we set $L=3$.

For all the parameter settings shown in Fig.~\ref{fig:4}, the proposed code is MDS, as ensured by Theorem~\ref{th:1} and further verified by computer search over $\mathbb{F}_{2^8}$. In contrast, some alternative constructions, such as HTEC, ET-RS, and BPD, are not MDS over $\mathbb{F}_{2^8}$ for part of this parameter range. For example, when $k=25$ and $r=4$, the lower bound on the field size of BPD is $q\geq 20477$ according to Table~\ref{tab:codes_and_fields}, which is far beyond $\mathbb{F}_{2^8}$.

The results in Fig.~\ref{fig:4} show that, under all evaluated parameter settings, the proposed code achieves the lowest average repair bandwidth ratio among the compared schemes. Combined with the field-size and sub-packetization comparison in Table~\ref{tab:codes_and_fields}, this indicates that the proposed conjugate-piggybacking design provides a favorable operating point for high-rate distributed storage: it reduces repair traffic more effectively than existing related piggybacking constructions, while retaining small sub-packetization and MDS feasibility over a practically manageable field size.

\begin{figure}[htpb]
	\centering
	\includegraphics[width=0.86\linewidth]{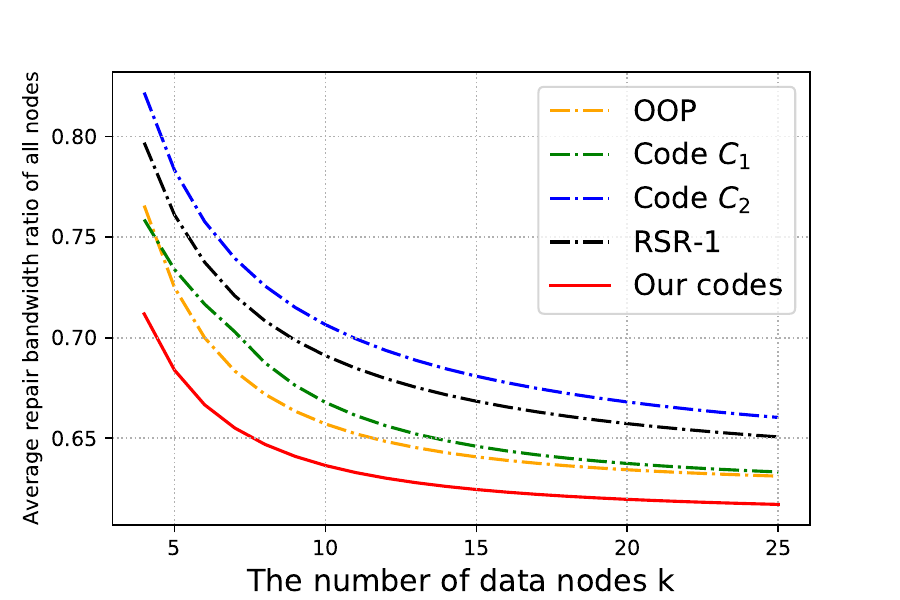}
	\caption{Average repair bandwidth ratio of all nodes for the proposed code, OOP code, code $\mathcal{C}_1$, code $\mathcal{C}_2$, and RSR-I code.}
	\label{fig:4}
\end{figure}

\subsection{Repair-Traffic Simulation}

To further connect the repair-bandwidth analysis with storage-oriented repair traffic, we also perform a single-node repair-traffic simulation. The simulation enumerates all possible single-node failures and applies the repair procedures described in Section~\ref{sec:3}. Since the failure distribution is uniform over the $n$ storage nodes, this event enumeration is equivalent to a Monte Carlo simulation with vanishing sampling error. The reported values are normalized by the number of data symbols $kr$ in the stripe, so a conventional systematic RS repair has normalized repair traffic equal to $1$.

Table~\ref{tab:repair_simulation} reports the simulated repair traffic for representative high-rate parameters. For $r=4$, we use $L=3$, consistent with Fig.~\ref{fig:4}. For the other cases, $L$ is chosen to minimize the simulated all-node repair traffic among $2\leq L\leq r$. The results show that the proposed code reduces the expected repair traffic by $37.0\%$--$54.8\%$ compared with conventional RS repair. The benefit is more pronounced for larger $r$, because the conjugate transformation significantly lowers the parity-node repair traffic while the piggybacking step continues to reduce data-node repair traffic.

\begin{table}[htpb]
\centering
\caption{Single-node repair-traffic simulation under uniform node failures. The ratios are normalized by $kr$ data symbols.}
\resizebox{\linewidth}{!}{
\begin{tabular}{|c|c|c|c|c|c|c|c|}
\hline
$k$ & $r$ & $L$ & $k/n$ & $\gamma_{\rm sys}$ & $\gamma_{\rm par}$ & $\gamma_{\rm all}$ & Reduction over RS \\
\hline
12 & 4 & 3 & 0.750 & 0.7014 & 0.4167 & 0.6302 & 37.0\% \\
24 & 4 & 3 & 0.857 & 0.6562 & 0.3854 & 0.6176 & 38.2\% \\
36 & 4 & 3 & 0.900 & 0.6412 & 0.3750 & 0.6146 & 38.5\% \\
52 & 4 & 3 & 0.929 & 0.6309 & 0.3702 & 0.6123 & 38.8\% \\
30 & 5 & 3 & 0.857 & 0.5978 & 0.3200 & 0.5581 & 44.2\% \\
36 & 6 & 3 & 0.857 & 0.5571 & 0.2731 & 0.5165 & 48.3\% \\
48 & 8 & 4 & 0.857 & 0.4922 & 0.2135 & 0.4524 & 54.8\% \\
\hline
\end{tabular}}
\label{tab:repair_simulation}
\end{table}

\section{Conclusion}
\label{sec:6}

In this paper, we proposed a new class of $(n,k,\ell)$ MDS array codes, called conjugate-piggybacking codes, with sub-packetization $\ell=n-k$. 
We showed that the proposed codes preserve the MDS property over moderate field sizes and achieve lower repair bandwidth than existing related piggybacking codes.
The repair-traffic simulation further shows that the proposed construction can substantially reduce the expected single-node repair traffic under representative high-rate storage parameters.
Future work includes constructing MDS array codes that further reduce repair bandwidth under small sub-packetization constraints.

\ifCLASSOPTIONcaptionsoff
\newpage
\fi

\bibliographystyle{IEEEtran}
\bibliography{CNC-v1}

@String { AUG          = {August} }

@String { COMPUTER     = {IEEE Computer Magazine} }

@String { GLOBECOM     = {Proc. IEEE GLOBECOM} }

@String { IEEETCOM     = {IEEE Trans. Communications} }

@String { IEEETIT      = {IEEE Trans. Information Theory} }

@String { INFOCOM      = {Proc. IEEE INFOCOM} }

@String { ISIT         = {Proc. {IEEE} Int. Symp. Inf. Theory} }

@String { MAY          = {May} }

@String { NOV          = {November} }

@String { SEP          = {September} }

@INPROCEEDINGS{2023gcpig,
  author={Jiang, Zhengyi and Shi, Hao and Huang, Zhongyi and Bai, Bo and Zhang, Gong and Hou, Hanxu},
  booktitle={GLOBECOM 2023 - 2023 IEEE Global Communications Conference}, 
  title={{Toward Lower Repair Bandwidth of Piggybacking Codes via Jointly Design for Both Data and Parity Nodes}}, 
  year={2023},
  volume={},
  number={},
  pages={7345-7350},
  doi={10.1109/GLOBECOM54140.2023.10436803}}

@ARTICLE{2018hetc,
  author={Kralevska, Katina and Gligoroski, Danilo and Jensen, Rune E. and Øverby, Harald},
  journal={IEEE Transactions on Big Data}, 
  title={{HashTag Erasure Codes: From Theory to Practice}}, 
  year={2018},
  volume={4},
  number={4},
  pages={516-529},
  doi={10.1109/TBDATA.2017.2749255}}

@INPROCEEDINGS{2023tran,
  author={Tang, Kaicheng and Cheng, Keyun and Chan, Helen H. W. and Li, Xiaolu and Lee, Patrick P. C. and Hu, Yuchong and Li, Jie and Wu, Ting-Yi},
  booktitle={IEEE INFOCOM 2023 - IEEE Conference on Computer Communications}, 
  title={{Balancing Repair Bandwidth and Sub-Packetization in Erasure-Coded Storage via Elastic Transformation}}, 
  year={2023},
  volume={},
  number={},
  pages={1-10},
  doi={10.1109/INFOCOM53939.2023.10228984}}

@INPROCEEDINGS{2023gcplus,
  author={Shi, Hao and Jiang, Zhengyi and Huang, Zhongyi and Bai, Bo and Zhang, Gong and Hou, Hanxu},
  booktitle={GLOBECOM 2023 - 2023 IEEE Global Communications Conference}, 
  title={{Piggybacking+ Codes: MDS Array Codes with Linear Sub-Packetization to Achieve Lower Repair Bandwidth}}, 
  year={2023},
  volume={},
  number={},
  pages={7351-7356},
  doi={10.1109/GLOBECOM54140.2023.10437215}}

@ARTICLE{2023wk1,
  author={Wang, Ke and Zhang, Zhifang},
  journal={IEEE Transactions on Communications}, 
  title={{Bidirectional Piggybacking Design for All Nodes With Sub-Packetization $2 \leq l \leq r$}}, 
  year={2023},
  volume={71},
  number={12},
  pages={6859-6869},
  doi={10.1109/TCOMM.2023.3311450}}

@INPROCEEDINGS{2023wk2,
  author={Wang, Ke and Zhang, Zhifang},
  booktitle={2023 IEEE Information Theory Workshop (ITW)}, 
  title={{Bidirectional Piggybacking Design for All Nodes with Sub-Packetization l = r}}, 
  year={2023},
  volume={},
  number={},
  pages={305-310},
  doi={10.1109/ITW55543.2023.10161644}}

@INPROCEEDINGS{2022Piggyback,
  author={Shi, Hao and Hou, Hanxu and Han, Yunghsiang S. and Lee, Patrick P. C. and Jiang, Zhengyi and Huang, Zhongyi and Bai, Bo},
  booktitle={2022 IEEE International Symposium on Information Theory (ISIT)}, 
  title={{New Piggybacking Codes with Lower Repair Bandwidth for Any Single-Node Failure}}, 
  year={2022},
  volume={},
  number={},
  pages={2601-2606},
  doi={10.1109/ISIT50566.2022.9834881}}

@article{jiang2024toward,
  title={{Toward Lower Repair Bandwidth and Optimal Repair Complexity of Piggybacking Codes with Small Sub-packetization}},
  author={Jiang, Zhengyi and Shi, Hao and Huang, Zhongyi and Bai, Bo and Zhang, Gong and Hou, Hanxu},
  journal={IEEE Transactions on Communications},
  year={2024},
  publisher={IEEE}
}

@INPROCEEDINGS{Shi2411:Conjugate,
AUTHOR="Hao Shi and Zhengyi Jiang and Zhongyi Huang and Bo Bai and Gong Zhang and
Hanxu Hou",
TITLE={{{Conjugate-Piggybacking} Codes: {MDS} Array Codes with Lower Repair
Bandwidth over Small Field Size}},
BOOKTITLE="2024 IEEE Information Theory Workshop (ITW) (ITW'2024)",
ADDRESS="Shenzhen, China",
PAGES="5.49",
DAYS=23,
MONTH=nov,
YEAR=2024
}

@misc{2023piggyback,
      title={{Two Piggybacking Codes with Flexible Sub-Packetization to Achieve Lower Repair Bandwidth}}, 
      author={Hao Shi and Zhengyi Jiang and Zhongyi Huang and Bo Bai and Hanxu Hou},
       journal = {arXiv e-prints},
         year = 2022,
        month = sep,
          eid = {arXiv:2209.09691},
        pages = {arXiv:2209.09691},
          doi = {10.48550/arXiv.2209.09691},
archivePrefix = {arXiv},
       eprint = {2209.09691},
 primaryClass = {cs.IT},
       adsurl = {https://ui.adsabs.harvard.edu/abs/2022arXiv220909691S}
}

@article{2017Piggybacking,
  title={{A Piggybacking Design Framework for Read-and Download-efficient Distributed Storage Codes}},
  author={ Rashmi, K. V.  and  Shah, Nihar B.  and  Ramchandran, Kannan },
  journal=IEEETIT,
  pages={5802-5820},
volume={63},  
number={9},
  year={2017}
}

@article{2018Repair,
  title={{A Repair-Efficient Coding for Distributed Storage Systems Under Piggybacking Framework}},
  author={ Yuan, Shuai  and  Huang, Qin  and  Wang, Zulin },
  journal=IEEETCOM,
  volume={66},
  number={8},
  pages={3245-3254},
  year={2018}
}

@article{2019AnEfficient,
  title={{An Efficient One-to-One Piggybacking Design for Distributed Storage Systems}},
  author={ Li, Gui Yang  and  Lin, Xing  and  Tang, Xiaohu },
  journal=IEEETCOM,
  volume={67},
  number={12},
  pages={8193-8205},
  year={2019}
}

@INPROCEEDINGS{2021piggyback,  
author={Zhengyi Jiang and Hanxu Hou and Yunghsiang S. Han and Zhongyi Huang and Bo Bai and Gong Zhang},  
booktitle= ISIT,   
title={An Efficient Piggybacking Design with Lower Repair Bandwidth and Lower Sub-packetization},
Year                     = {2021},
  Pages                    = {2328--2333}
}

@ARTICLE{2021piggybacking,  
author={Sun, Rong and Zhang, Lu and Liu, Jingwei},  
journal={IEEE Communications Letters},   
title={{A New Piggybacking Design with Low Repair Bandwidth and Complexity}},   
year={2021},  
volume={25},  
number={7},  
pages={2099--2103}}

@inproceedings{2018A,
  title={{A Tight Lower Bound on the Sub- Packetization Level of Optimal-Access MSR and MDS Codes}},
  author={ Balaji, S. B.  and  Kumar, P. Vijay },
  booktitle=ISIT,
   Year                     = {2018},
  Pages                    = {2381--2385}
}

@Article{dimakis2010,
  Title                    = {{Network Coding for Distributed Storage Systems}},
  Author                   = {Dimakis, A.G. and Godfrey, P.B. and Wu, Y. and Wainwright, M.J. and Ramchandran, K.},
  Journal                  = IEEETIT,
  Year                     = {2010},

  Month                    = {Sep.},
  Number                   = {9},
  Pages                    = {4539--4551},
  Volume                   = {56}
}

@Article{hou2019b,
  Title                    = {{Binary MDS Array Codes with Optimal Repair}},
  Author                   = {Hou, Hanxu and Lee, Patrick P C},
  Journal                  = IEEETIT,
  Year                     = {2020},
  Month                    = {Mar.},
  Number                   = {3},
  Pages                    = {1405–-1422},
  Volume                   = {66}
}

@Article{hou2019a,
  Title                    = {{Multi-Layer Transformed MDS Codes with Optimal Repair Access and Low Sub-Packetization}},
  Author                   = {Hou, Hanxu and Lee, Patrick P C and Han, Yunghsiang S},
  Journal                  = {arXiv preprint arXiv:1907.08938},
  Year                     = {2019}
}

@Article{hou2016,
  Title                    = {{BASIC Codes: Low-Complexity Regenerating Codes for Distributed Storage Systems}},
  Author                   = {Hou, Hanxu and Shum, Kenneth W. and Chen, Minghua and Li, Hui},
  Journal                  = IEEETIT,
  Year                     = {2016},
  Number                   = {6},
  Pages                    = {3053-3069},
  Volume                   = {62}
}

@Article{li2018,
  author    = {Li, Jie and Tang, Xiaohu and Tian, Chao},
  title     = {{A Generic Transformation to Enable Optimal Repair in MDS codes for Distributed Storage Systems}},
  journal   = IEEETIT,
  year      = {2018},
  volume    = {64},
  number    = {9},
  pages     = {6257--6267}
}

@Article{rashmi2011,
  Title                    = {{Optimal Exact-Regenerating Codes for Distributed Storage at the MSR and MBR Points via a Product-Matrix Construction}},
  Author                   = {K. V. Rashmi and N. B. Shah and P. V. Kumar},
  Journal                  = IEEETIT,
  Year                     = {2011},
  Month                    = aug,
  Number                   = {8},
  Pages                    = {5227--5239},
  Volume                   = {57}
}

@Article{reed1960,
  Title                    = {{Polynomial Codes over Certain Finite Fields}},
  Author                   = {Reed, Irving S and Solomon, Gustave},
  Journal                  = {Journal of the Society for Industrial \& Applied Mathematics},
  Year                     = {1960},
  Number                   = {2},
  Pages                    = {300--304},
  Volume                   = {8},
  Publisher                = {SIAM}
}

@Article{tamo2013,
  Title                    = {{Zigzag Codes: MDS Array Codes with Optimal Rebuilding}},
  Author                   = {Tamo, Itzhak and Wang, Zhiying and Bruck, Jehoshua},
  Journal                  = IEEETIT,
  Year                     = {2013},

  Month                    = may,
  Number                   = {3},
  Pages                    = {1597--1616},
  Volume                   = {59}
}

@article{2017Explicit,
  title={{Explicit Constructions of Optimal-Access MDS Codes with Nearly Optimal Sub-Packetization}},
  author={ Ye, Min  and  Barg, Alexander },
  journal={IEEE Transactions on Information Theory},
  volume={63},
  number={10},
  pages={6307--6317},
  year={2017}
}
\end{document}